\newtheorem{theorem}{Theorem}[section]
\newtheorem{lemma}[theorem]{Lemma}
\newtheorem{proposition}[theorem]{Proposition}
\newtheorem{example}[theorem]{Example}
\newcommand{\Rmnum}[1]{\expandafter\@slowromancap\romannumeral #1@}
\begin{document}

\title{Managing Relocation and Delay in Container Terminals with Flexible Service Policies}
\author{Setareh Borjian
       \thanks{ ORC and CEE, MIT.
    Email: \protect\url{sborjian@mit.edu}.}
\and
Vahideh H. Manshadi
    \thanks{ School of Management, Yale.
    Email: \protect\url{vahideh.manshadi@yale.edu}.}
\and
Cynthia Barnhart
    \thanks{ CEE, MIT.
    Email: \protect\url{cbarnhar@mit.edu}.}
\and
Patrick Jaillet
    \thanks{ EECS and ORC, MIT.
    Email: \protect\url{jaillet@mit.edu}.}
}
\date{}
\maketitle

\begin{abstract}
We introduce a new model and mathematical formulation for planning crane moves in the storage yard of container terminals. Our objective is to develop a tool that captures customer-centric elements, especially service time, and helps operators to manage costly relocation moves.
Our model incorporates several practical details and provides port operators with expanded capabilities including planning repositioning moves in off-peak hours, controlling wait times of each customer as well as total service time, optimizing the number of relocations and wait time jointly, 
and optimizing simultaneously the container stacking and retrieval process. We also study a class of flexible service policies which allow for out-of-order retrieval.
We show that under such flexible policies, we can decrease the number of relocations and retrieval delays without creating inequities.
\end{abstract}

\section{Introduction}
\label{sec:intro}

A container terminal is a facility where import/export/transshipment cargo containers arrive and are later distributed to nearby cities or loaded onto vessels for onward transportation. On the seaside, containers are unloaded from (or loaded to) vessels using huge berth cranes, and on the land side, they are stacked in (or retrieved from) a storage yard using smaller mobile cranes. \textit{Internal} trucks transfer containers from one point to another \textit{inside} the terminal, whereas \textit{external} trucks transfer the containers from the storage yard to \textit{outside} the terminal, or vice versa. 

In a typical container terminal, the storage yard is the area in which containers are stacked temporarily before they are loaded onto vessels or external trucks. The storage yard is usually divided into several bays. There are several columns in each bay, formed by stacking containers in tiers. The slot occupied by a container in a bay, can be specified by the tier and the column in which the container is stacked (see Figure \ref{Figure1}). In this paper, we focus on yard operations management in a terminal that handles import containers. 

Upon arrival for a retrieval, the external truck driver checks in at the terminal entrance gate with the information of the container to be picked up, and then drives to the bay where the container is stacked. The container is retrieved by a yard crane and loaded onto the truck. If the container is not at the top of its column and is covered, the \textit{blocking} containers must be relocated. These relocations are costly for port operators, as they are not charged to the customer. Moreover, the relocations slow the retrieval process, requiring external trucks to wait to receive their containers.

In the storage area, the number of relocation moves and the average waiting time of external trucks are among the key performance measures (\citet{ref17}). The main question, then, that yard managers face is as follows: given the estimated arrival and departure times of the containers (in a certain period of time), what is the stacking, relocating, and retrieval plan that will minimize the number of relocations while retrieving each container within a reasonable time? In this paper, we address this question by developing mathematical models and tools that capture several practical considerations and provide useful insights that help port operators manage yard crane moves. 

\begin{figure}[htb!]
\centering
{\includegraphics[width=4cm,height=4cm]{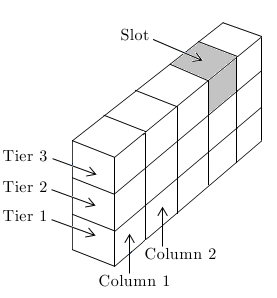}}
\caption{Illustration of a bay}
\label{Figure1}
\end{figure}

The above problem in a simplified setting is referred to as the Container Relocation Problem (CRP) or the Block Relocation Problem (BRP). The solution to the CRP is the sequence of moves to retrieve all containers with a minimum number of relocations, assuming that : \textbf{(1)} Containers are retrieved in a pre-defined order; \textbf{(2)} No container is stacked during the retrieval process; and \textbf{(3)} Only the containers that are on top of the target container are relocated. Even with this simplified setting, the problem is shown to be NP-hard by \citet{RePEc}, which explains why many proposed solution methods have been based on heuristics rather than exact optimization methods. Several authors consider heuristics based upon a set of rules designed to minimize the expected number of additional relocations as in the paper by \citet{ref5}, or to postpone (or avoid) future relocations as much as possible as presented in the paper by \citet{ref15} and in \cite{RePEc}.
Extending the objective function to the weighted sum of crane working time and number of relocations, \citet{ref4} present a three-phase heuristic that generates an initial feasible sequence of moves in the first phase and reduces the number of moves and crane's working time in phase two and three.
In other papers, authors use local search methods. \citet{ref10} use a tree search and \citet{ref12} use the corridor method to find the best slot for relocating a container. Alternatively, a number of papers are based on mathematical programming formulations for the CRP and attempt to solve them exactly. In one of the very first papers, \citet{ref5} present a formulation and provide a branch and bound algorithm for obtaining the optimal locations for relocated containers. In \cite{RePEc}, the authors identify the optimum sequence of relocation moves using a binary Integer Programming (IP) formulation. \citet{ref9} solve the same problem using a mixed IP formulation, and show that it has fewer decision variables and better runtime performance. \citet{ref19} and \citet{ref7} also present a binary IP for the CRP and then propose IP-based heuristics. For a classification scheme and a comprehensive literature review of the CRP and its variants in other practical applications, see the recent paper by \citet{Lit.Review}.

The critical assumption behind the CRP is that the containers are retrieved in a predefined \textit{order}, and the common aspect of all proposed mathematical models and heuristics is that they focus only on minimizing container relocations, and do not capture truck wait times or crane idle times when no truck is awaiting a retrieval. This can be significant in that such idle time allows for container \textit{repositioning} that will result in increased timeliness of service (reduced wait times) when trucks do arrive. Moreover, restricting the relocation only to containers above the target container can lead to suboptimal solutions. 

In our work, we introduce a new mathematical model that incorporates practical details, and captures customer-centric service elements, such as waiting times. In our model we include \textit{service times} (estimated departure and arrival times of containers) and determine the \textit{time} of each move as part of the solution.  Furthermore, we require that each retrieval and stacking task be completed within a \textit{service time window}, which can be set by the port operator based on task priorities and acceptable wait times. In addition, our model allows for taking into account idle time and planning repositioning moves that can be performed during such time. We refer to this generalized model as the \textit{time-based} CRP and present an Integer Programming (IP) formulation for it.

One of our main contributions is a new IP formulation that jointly minimizes the number of relocation moves (including repositioning moves) and service time, and provides the sequence of moves for retrieving containers for external trucks based on various service policies. For example, our model can be used by port operators to minimize external truck wait time, or to give different waiting time guarantees to different customers to reflect relative priorities.
Moreover, we expand the capabilities of existing approaches by determining repositioning moves that reduce overall service time, rather than the number of container relocations, thus moving from a focus of cost minimization for the port operators to one of maximization of customer service levels. 

In addition, we extend our model and formulation to the setting where containers are stacked and retrieved continuously, which we refer to as the Dynamic Container Relocation Problem (DCRP) \cite{ref7}. A special case of the DCRP is to find the best slots for stacking incoming containers such that they can be retrieved in the future with a minimum number of relocations. The problem has been addressed in a limited number of papers. \citet{ref16} use a heuristic to find the best stacking location based on the estimated departure time and the distance from the exit point. In another paper, \citet{ref19} apply index-based and IP-based heuristics to the DCRP, assuming a fixed stacking and retrieval order. In more recent work, \citet{ref7} propose an IP formulation and different heuristics that retrieve and stack containers in a pre-defined order. 
Also, some authors such as \citet{ref23} study variants of the DCRP by relaxing some of the constraints such as the columns height limit.

In our model, we relax the strict stacking ordering, and jointly optimize the ordering and the slot for stacking an incoming container. In the stacking process, we impose pre-defined orderings among \textit{groups} of containers (rather than for each container). The rationale for this is that typically, a group of containers is discharged from the vessel and is available for stacking with no difference in priority among the containers in the group.

We also propose and study a new class of retrieval policies. We relax the first-come-first-served (FCFS) policy and allow for \textit{out-of-order} retrievals. We show this flexible retrieval policy is surprisingly efficient and aligned with the benefits of port operators and customers. We demonstrate the impacts of this policy by measuring the number of relocations and service times. We prove and verify through experiments that such a policy results in fewer relocations and reduced service times. Moreover, our experiments show that flexible retrieval planning results in equitable customer services, an intuitively clear notion that which we will define in our context.

The structure of this paper is as follows. In Section \ref{formulation}, we present our time-based model and formulation and present an extended formulation for the DCRP. In Section \ref{sec 4}, we propose a flexible policy for retrieving containers and present the results of numerical experiments using this policy. Finally in Section \ref{discussion}, we summarize results, and suggest directions for future work.

\section{Mathematical Model and Formulation for the Time-Based CRP}
\label{formulation}
In this section, we present the mathematical model for the time-based CRP. We also develop an IP formulation that expands on the $BRP-I$ formulation presented in \cite{RePEc}. We illustrate cases that arise operationally and can be handled only by our time-based model. Moreover in Section \ref{DCRP}, we give an extension of the model that is capable of handling retrieval and stacking processes and thus can be used for solving the dynamic container relocation problem. The mathematical model and formulation in this section require that containers be retrieved in a pre-defined order (FCFS policy). In Section \ref{sec 4}, we relax this assumption and introduce a class of policies that allow for out-of-order retrievals. Table \ref{table:notations} summarizes the notations that will be used in the rest of the paper for describing the bay configuration and moves, defined as stacking, relocation, or retrieving one container.

In the time-based CRP, we are given a bay with $C$ columns and $P$ tiers, where containers are stacked on top of each other up to height $P$ (typically four or five) in each column. In a given time horizon (say a day), $N$ containers need to be retrieved from this bay and delivered to external trucks. The departure time of each container is the earliest time that it can be retrieved, and is the same as the arrival time of its corresponding truck. Throughout this paper, departure times are assumed to be known beforehand. 

\begin{table}[htb]
\begin{center}
\begin{tabular}{ l c  l}
 \hline
&  Notation & Description\\
\hline

Bay configuration & $ C\times P$  & A bay with C columns and P tiers.\\ \\

Container & $(c_n,d_n)$ & A container with label $c_n \in \{c_1, c_2, c_3,\dots\}$. The time-step  \\

& & at which the external truck corresponding with container $c_n$  \\\

& & arrives, is denoted by $d_n \in (1, 2, 3,\dots)$.  \\ \\


Slot & $[i,j]$ & The slot in the $i^{th}$ column and $j^{th}$ tier of the bay where\\
&& $1 \leq i \leq C$ and $1 \leq j \leq P$. Tiers are indexed from the bottom \\

&& such that the lowest and topmost tiers are represented\\
&& by $j=1$ and $j=P$, respectively. \\ \\

Relocation move & $[i,j] \rightarrow [k,l]$ & A relocation from slot $[i,j]$ to slot $[k,l]$.\\ \\

Retrieval move &  $[i,j] \rightarrow out$ & Retrieving a container from slot $[i,j]$.\\ \\

Schedule & $\textit{S} \in R^N$ & The arrival time-steps of external trucks that pick up \\
&& containers $c_1, c_2, \dots, c_N$.\\
\hline
   
\end{tabular}
\end{center}
\caption{Notations}
\label{table:notations}
\end{table}

Our time-based model relies on actual service times rather than the departure orders of containers. To define arrival times of external trucks for input to the model, we discretize the planning horizon into time-steps, where one time-step is the minimum time needed to complete one move. We then translate the arrival time of the trucks to time-steps. For example if the planning horizon is from $9 am$ to $11 am$ and each move takes on average $4$ minutes to complete, there will be $30$ time-steps in the planning horizon. In this example if an external truck arrives at 9:15, its arrival time-step is $4$ (note that if a truck arrives in the middle of one time-step, we assign it to the next time-step; so two or more external trucks may have the same arrival time, but we require that they be served in the order that they arrived).
For describing the bay configuration and moves, we use the same variables as those in $BRP-I$ \cite{RePEc}. However, unlike $BRP-I$ in which the time index, $\textit{t}$, is merely for determining the \emph{order} of the moves, in our model it represents the actual time of the day that each move is performed. Moreover, the input of the model is the arrival time of the external trucks rather than the departure order of containers. 
The four sets of variables are as follows:

\begin{eqnarray}
b_{ijnt} & = & \left\{
\begin{array}{l l}
1 & \quad \textrm{if container $n$ is in [$i$,$j$] at time $t$,}\\
0 & \quad \textrm{otherwise;}
\end{array} \right. \label{eq:var1} \\
x_{ijklnt} &=& \left\{
 \begin{array}{l l}
    1 & \quad \textrm{if container $n$ is relocated from [$i$,$j$] to [$k$,$l$] at time $t$,}\\
    0 & \quad \textrm{otherwise;}
  \end{array} \right.\label{eq:var2} \\
y_{ijnt} & = & \left\{
 \begin{array}{l l}
   1 & \quad \textrm{if container $n$ is retrieved from [$i$,$j$] at time  $t$,}\\
   0 & \quad \textrm{otherwise;}
 \end{array} \right.\label{eq:var4} \\
v_{nt} &=& \left\{
   \begin{array}{l l}
     1 & \quad \textrm{if container $n$ has been retrieved at time $t^{'}\in$\{1,...,$t$-1\}, }\\
     0 & \quad \textrm{otherwise;}
   \end{array} \right. \label{eq:var6}
\end{eqnarray}

Given the arrival time of the external trucks, we can specify the time-step at which each container is ready to be retrieved. We denote the departure time of container $c_n$ (i.e., the time-step that $c_n$ is ready to be retrieved) by $d_n$.  Further, we denote by $\delta_n$ the maximum amount of delay (measured in time-step) that is allowed in retrieving container $c_n$. Note that retrieval delay is the same as the wait time of external truck, and we will use these two terms interchangeably.
Given $d_n$ and $\delta_n$, we define a retrieval time window for each container as $[d_n , d_n+\delta_n]$, and require that containers be retrieved within these time windows. Also, if the departure time of a container is unknown or beyond the planning horizon, we allow it to stay in the bay arbitrarily long by setting its departure time to infinity.

Let $T$ be the total number of time-steps required to retrieve the containers. Because each retrieval task has a due time determined by its service time window, $T$ can be expressed as follows:

\begin{equation}
T = \max_{1 \leq n \leq N, ~d_n < \infty} \{d_{n}+\delta_n \}.\label{T formula}\\
\end{equation}

The set of constraints that ensures the configuration is feasible at each time-step and at most one move is performed at each time-step is given by \eqref{const1c}-\eqref{const7c}, which are similar to the constraints in $BRP-I$ \cite{RePEc}. We add Constraints \eqref{const5c}-\eqref{const13} to ensure the containers are retrieved in the order given by their indices in the bay, and within their service time windows.

Constraints \eqref{const1c} ensure that at each time-step, each container is either in the yard, or has been retrieved. Constraints \eqref{const2c} and \eqref{const3c} ensure that each slot is occupied by at most one container, and there are no empty slots between containers in a column. Constraints \eqref{const4c} require that at each time-step, at most one move takes place. Constraint \eqref{const6c} update the configuration of the yard at time-step $t$ based on the configuration and the move that took place in time-step $t-1$. Constraints \eqref{const7c} capture the relations between retrieval variables and their corresponding configuration variables.

Constraints \eqref{const5c} ensure that at most $n-1$ retrievals are performed before retrieving each container, $(c_n,d_n)$. (Proposition \ref{pro_flex} in Section \ref{sec 4} shows that satisfying Constraints \eqref{const5c} imply that containers are retrieved in prescribed order). 
Constraints \eqref{const12} ensure that each container is retrieved only after the arrival of its corresponding external truck. Constraints \eqref{const9} ensure that containers are retrieved within their allowable time windows, and Constraints \eqref{const13} require that containers are not retrieved at some time beyond their allowed time window.

\begin{eqnarray}
&& \sum\limits_{i=1}^{C}\sum\limits_{j=1}^{P}b_{ijnt}+v_{nt}= 1,\quad n=1,...,N,\quad t=1,...,T  \label{const1c}\\
&& \sum\limits_{n=1}^{N}b_{ijnt} \leq 1,\quad i=1,...,C,\quad j=1,...,P,\quad t=1,...,T \label{const2c}\\
&&\sum\limits_{n=1}^{N}b_{ijnt} \geq \sum\limits_{n=1}^{N}b_{ij+1nt},\quad i=1,...,C,\quad j=1,...,P-1,\quad t=1,...,T \label{const3c}\\
&&\sum\limits_{i,k=1}^{C}\sum\limits_{j,l=1}^{P}\sum\limits_{n=1}^{N} x_{ijklnt} + \sum\limits_{i=1}^{C}\sum\limits_{j=1}^{P}\sum\limits_{n=1}^{N}y_{ijnt} \leq 1,\quad t=1,...,T \label{const4c}\\
&& b_{ijnt}=b_{ijnt-1}-\sum\limits_{k=1}^{C}\sum\limits_{l=1}^{P} x_{ijklnt-1} + \sum\limits_{k=1}^{C}\sum\limits_{l=1}^{P} x_{klijnt-1} - y_{ijnt-1}\label{const6c} \\
&&\quad \quad \quad \quad \quad \quad \quad  \quad \quad \quad \quad \quad \quad i=1,...,C,\quad j=1,...,P,\quad n=1,...,N,\quad t=2,...,T \nonumber\\
&&v_{nt} =\sum\limits_{i=1}^{C}\sum\limits_{j=1}^{P}\sum\limits_{t^{'}=1}^{t-1}y_{ijnt^{'}},\quad n=1,...,N,\quad t=1,...,T \label{const7c}\\
&&(t-n)\sum\limits_{i=1}^{C}\sum\limits_{j=1}^{P}y_{ijnt} + \sum\limits_{i=1}^{C}\sum\limits_{j=1}^{P}\sum\limits_{n=1}^{N}\sum\limits_{t^\prime=1}^{t-1}y_{ijnt^\prime}\le t-1 
\label{const5c} \\
&&\quad \quad \quad \quad \quad \quad \quad  \quad \quad \quad \quad \quad \quad t=1,\dots,T,\quad n=1,\dots,N \quad \textrm{and} \quad d_1<d_2<\dots<d_N \nonumber \\
&&\sum\limits_{i=1}^{C}\sum\limits_{j=1}^{P}\sum\limits_{t=1}^{d_{n}-1}y_{ijnt}=0,\quad n=1,...,N \label{const12}\\
&&\sum\limits_{i=1}^{C}\sum\limits_{j=1}^{P}\sum\limits_{t=d_{n}}^{d_{n}+\delta_n}y_{ijnt}=1,\quad n=1,...,N \label{const9}\\
&&\sum\limits_{i=1}^{C}\sum\limits_{j=1}^{P}\sum\limits_{t=d_{n}+\delta_n+1}^{T}y_{ijnt}=0,\quad n=1,...,N \label{const13}
\end{eqnarray}

The objective is to jointly minimize the relocations and delay (difference between the time that a container is delivered to an external truck and the time that the truck arrives). We define the objective function as the weighted sum of these two efficiency metrics. The total number of relocation moves can be computed as $\sum_{i,k=1}^{C}\sum_{j,l=1}^{P}\sum_{n=1}^{N} \sum_{t = 1}^{T}x_{ijklnt}$.

Also note that the earlier we retrieve container $n$, the larger is the summation $\sum\limits_{t=d_n}^{T}v_{nt}$.
Thus we use the negative of this quantity as a measure of delay in retrieving containers. We also define parameters $w_{rel}$ and $w_{r}$ as the weight factors for relocations and retrieval delays, respectively. The weight factors are set by port operator based on the port policies and sensitivity of port operations to the number of relocations and total delay. The optimization model for the time-based CRP is as follows:

\begin{eqnarray}
\min \textrm{ }
w_{rel}\sum\limits_{i,k=1}^{C}\sum\limits_{j,l=1}^{P}\sum\limits_{n=1}^{N}\sum\limits_{t=1}^{T}x_{ijklnt}- w_{r}\sum\limits_{n=1}^{N}\sum\limits_{t=d_n}^{T}v_{nt} \label{objective}\\
s.t. \eqref{const1c} - \eqref{const13} \quad \textrm{and} \quad 
b, x , y, v \ge 0 \nonumber
\end{eqnarray}

One of the main features of the time-based model
is that the arrival time of external trucks can be specified. This is important because the trucks do not necessarily arrive uniformly over time (especially during off-peak hours), resulting in \textit{idle} times, i.e., the time that the yard crane is not busy with the retrieval process. In practice, during idle times, port operators may reposition the containers, in order to speed up the retrieval process in the future. 

Incorporating service times into the model, we can account for the idle times and plan repositioning moves based on truck arrival schedules and bay configuration. Note that the repositioning might not affect the total number of relocations, but it certainly affects the service delay for future arriving trucks, because some of the unavoidable relocations are performed beforehand (during the repositioning process). Thus, a reasonable way to plan repositioning is by considering its effect on the number of relocations and service delay (for future truck arrivals). Through jointly minimizing these two objectives, our time-based model can decide whether to do repositioning during the idle period, and which repositioning moves are most beneficial. The decision depends on the relative weights of the two objectives ($w_{rel}$ and $w_r$). If $w_{rel}$ is large compared to $w_r$, the optimal decision will be to avoid repositioning or perform very few repositioning moves. Conversely, if $w_{rel}$ is low compared to $w_{r}$, it might be optimal to perform some additional relocations (during the repositioning process) in order to further decrease service delay. In Appendix A, we illustrate this trade-off between relocation and service delay.

We investigated whether it is beneficial for the port (and customers) to schedule trucks in such a way that containers can be repositioned regularly. To examine this policy, we study the effect of two types of traffic (uniform and non-uniform) on service delay and on the number of relocations.

Consider a $C\times P$ bay with $N$ containers with a given initial configuration. Suppose that there are $N$ external trucks each picking up one container. We study two cases regarding the arrival schedule of the external trucks: 1. Uniform traffic: $N$ external trucks arrive at the yard uniformly and continuously over time, i.e., exactly one truck arrives at each time-step, and 2. Non-uniform traffic: $N$ external trucks arrive at the yard in two batches (two bursts of traffic) and there is a period of no-arrivals between the arrival of the two batches. We generate 500 random instances of a $4\times 6$ bay with 18 containers, and solve each instance twice: with uniform and non-uniform arrival times of the external trucks. In the case of non-uniform traffic, the arrival schedule is such that the first 9 containers arrive at times 1 to 9, and the second batch arrives at times 20 to 28; no truck arrives between time 9 and 20. We then compare the average delay per container (over all instances) between the two cases.

Note that the number of relocations required to retrieve the containers is the same in both cases. To study the impact on service delay, we set the objective function as the weighted sum of relocations and retrieval delay with equal weights. For each instance, the moves for retrieving the first batch of containers ($c_1$ to $c_9$) are the same in both cases. In 38 out of 500 instances, no repositioning is performed during the off-peak hours. In other instances, the service delay of containers in the second batch is decreased when there is non-uniform traffic because the containers are repositioned, i.e., some of the relocation moves can be performed during the period with no truck arrivals.

Figure \ref{boxplot} shows the distribution of the difference of average delay per container for the two cases (uniform traffic delay - non-uniform traffic delay). The red line in the box is the median and the blue dot is the mean of the distribution for the 500 instances. The average delay (over all 500 instances) per container is decreased by $17\%$ when traffic is non-uniform.

\begin{figure}[htb]
\centering
{\includegraphics[width=5cm,height=5cm]{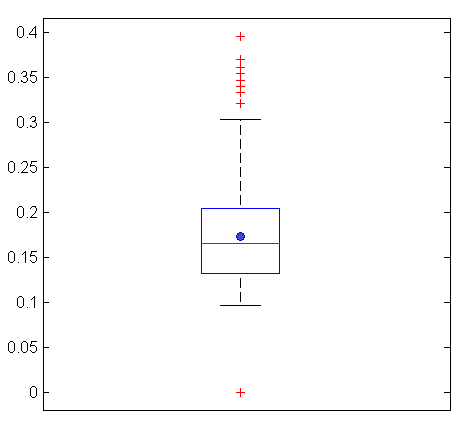}}
\caption{{\protect \footnotesize {Effect of traffic burst on delay per container: relative decrease in delay per container for the non-uniform traffic case compared to the uniform traffic case (average over 500 instances for $4\times 6$ bay)}}}
\label{boxplot}
\end{figure}

The impact of different scheduling policies on service delay is especially important in managing appointment systems and scheduling external trucks. In terminals with appointment system, truck drivers can make appointments to pick up or drop off containers during specific time windows and such a system has been shown to be very effective in reducing the number of relocation moves and service delay (\citet{Anne}). Given bay configuration and the current arrival schedule of external trucks, the appointment system can be used to recommend to the next truck a pick-up time that results in less retrieval delay and thus less waiting time.\\

The second important feature of the time-based model is the service-time window. Using time windows allows for more \textit{flexible} constraints and retrieval policies. For instance, it is possible for the port operator to choose time windows such that certain trucks (for example those registered in the appointment system in advance) experience less delay, or to serve some trucks (retrieve some containers) {\em out-of-order} with specified maximum amounts of service delay for each truck. 
This policy is discussed in detail in Section \ref{sec 4}. In addition to modelling flexibility, using time windows allows for more efficient computation by shrinking the search space of the IP (because we set many variables to zero by restricting the containers to be retrieved within a time window).

Note that one can set the time windows arbitrarily large without affecting optimality; i.e., any solution to a problem with a smaller time window is also a solution to the problem with larger time windows. However, for computational reasons, we strive to tighten the time windows as much as possible. In short, we are interested in the smallest time window that ensures problem  feasibility and optimality. Proposition \ref{time window UB} gives a bound with such properties for the retrieval time window of each container ($\delta^*_n$), for a container relocation problem with the objective to minimize the total number of relocations. 
We use the following lemma to prove Proposition \ref{time window UB}. 

\begin{lemma} 
Consider a CRP with the objective to minimize the total number of relocation moves. There exists an optimal sequence of moves, $M^*$, such that if at any time-step, $t$, there is a container that is ready to be retrieved, that time-step is not idle. 
\label{Lemma1}
\end{lemma}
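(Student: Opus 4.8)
The plan is to exploit the fact that in the classic CRP the objective in \eqref{objective} with $w_{r}=0$ --- i.e., minimizing only the total number of relocations $\sum x_{ijklnt}$ --- does not depend on \emph{when} moves are performed, but only on \emph{which} moves are performed and in what order. Consequently, I would start from any optimal sequence of moves and merely re-time those moves, pushing each as early as possible; since re-timing neither adds nor removes relocations, the resulting schedule is still optimal, and it then remains only to verify that it has the stated no-idle property.

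Concretely, fix an optimal solution and read off its moves (relocations and retrievals) in the order they occur, discarding idle steps, to obtain an ordered list $m_1, m_2, \dots, m_K$. I would then define the \emph{eager} schedule that assigns $m_1, m_2, \dots$ to time-steps greedily: having placed $m_{i-1}$, place $m_i$ at the earliest subsequent step that is admissible, where the only timing restriction is Constraint \eqref{const12}, namely that a retrieval of $c_n$ may not be scheduled before $d_n$. Because relocations carry no timing restriction and the order of moves is unchanged, the bay configuration evolves identically to the original solution, so every move remains feasible with respect to \eqref{const1c}--\eqref{const7c}; only the placement of idle steps differs. Hence the eager schedule is feasible and performs exactly the same relocations, so it is optimal.

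It then remains to show that the eager schedule never idles while some container is ready to be retrieved, where I take ``ready'' to mean that the container's truck has already arrived (departure time at most $t$) and it has not yet been retrieved, so that by the FCFS rule and assumptions \textbf{(1)}--\textbf{(3)} it is the current target. By construction, the eager schedule leaves a step $t$ idle only when the next pending move $m_i$ cannot be placed at $t$, and the sole obstruction is that $m_i$ is a retrieval of some $c_n$ with $d_n > t$; a relocation would always be placeable, since assumption \textbf{(3)} means any relocation clears a blocker of the current target and is never postponed. Using the ordering $d_1 < d_2 < \dots < d_N$ assumed in Constraint \eqref{const5c}, all containers not yet retrieved at step $t$ --- namely $c_n, c_{n+1}, \dots$ --- satisfy $d_n, d_{n+1}, \dots > t$, so none of their trucks has arrived and none is ready to be retrieved at $t$. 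Thus every idle step of the eager schedule occurs only while the yard waits for the next truck, which proves the claim. I expect the main obstacle to be pinning down the definition of ``ready to be retrieved'' and verifying the dichotomy that every idle step in the eager schedule is caused by an unarrived truck rather than by a postponed relocation; this is precisely where assumptions \textbf{(2)}--\textbf{(3)} (no stacking, and relocating only containers above the current target) together with the strict ordering $d_1 < \dots < d_N$ are needed.
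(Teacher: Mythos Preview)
Your proof is correct and rests on the same observation the paper exploits: since the relocation count is invariant under re-timing, one may shift moves earlier without losing optimality. The difference lies only in how the re-timing is carried out. The paper works locally: given an optimal $M$ with a bad idle step $\tilde{t}$ (idle while some container is ready), it swaps $M(\tilde{t})$ with $M(\tilde{t}+1)$ and repeats, pushing that idle step forward until it lands at a time where no container is ready; feasibility and optimality are checked for the swap. You instead rebuild the schedule globally by extracting the ordered move list and packing it eagerly subject only to \eqref{const12}, then argue directly that any remaining idle step must precede a retrieval of some $c_n$ with $d_n>t$, whence by FCFS and $d_1<\dots<d_N$ all unretrieved containers have departure time exceeding $t$. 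Your construction is slightly cleaner in that it handles all idle steps at once and produces the final $M^*$ in one pass, whereas the paper's swap argument implicitly requires iterating over every problematic idle step. One small remark: your appeal to assumption \textbf{(3)} (only blockers are relocated) is unnecessary---relocations are always placeable in the eager schedule simply because no constraint in \eqref{const1c}--\eqref{const13} imposes a lower time bound on a relocation move, regardless of which container is moved; both the lemma and the paper's own proof hold in the time-based model without assumption \textbf{(3)}.
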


\begin{proof}
See Appendix B.
\end{proof}

\begin{proposition}
\label{time window UB}
Consider a bay with $N$ containers, where $d_1<d_2<\dots<d_N$. Suppose the objective is to minimize the total number of relocation moves, $R^*$ is the optimal value when $\delta_n$ is arbitrarily large, and $H \ge R^*$ is an upper bound given by any heuristic (easily computable). We can reduce $\delta_n$ to $\delta^*_n$, given by \eqref{T.W:UB}, without increasing the number of relocations.

\begin{equation}
\delta^*_n = H + (n - d_n)^{+}.
\label{T.W:UB}
\end{equation}

\end{proposition}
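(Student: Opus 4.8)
The plan is to exhibit a single optimal schedule that already respects the tightened windows, so that shrinking each $\delta_n$ to $\delta^*_n$ cannot destroy optimality. Concretely, I would invoke Lemma \ref{Lemma1} to fix an optimal sequence $M^*$ with exactly $R^*$ relocations in which no time-step is idle while some container is ready for retrieval. Let $t_n$ denote the time-step at which $c_n$ is retrieved in $M^*$. Since \eqref{const12} already forces $t_n \ge d_n$, it suffices to prove the upper bound $t_n \le d_n + \delta^*_n$: this shows $M^*$ is feasible for the problem with windows $[d_n, d_n + \delta^*_n]$, whose optimum therefore equals $R^*$ (it can be no smaller, being a restriction of the original problem).

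For the upper bound I would classify the $t_n$ time-steps of $[1, t_n]$ by type. Because \eqref{const5c} enforces FCFS retrieval, the containers retrieved in $[1, t_n]$ are exactly $c_1, \ldots, c_n$, contributing $n$ retrieval steps; the relocation steps number at most the global total $R^* \le H$; the remaining steps are idle. Hence $t_n \le n + H + I$, where $I$ is the number of idle steps in $[1, t_n]$, and the whole argument reduces to bounding $I$.

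The main obstacle, and the only place Lemma \ref{Lemma1} is really used, is controlling $I$. At an idle step $t < t_n$ no container is ready, so in particular $c_n$ (not yet retrieved) has not arrived, giving $t < d_n$; thus every idle step lies in $[1, d_n - 1]$. Moreover ``no container ready'' means every already-arrived container has been retrieved, so if $t^\ast$ is the last idle step (if there is none, $I=0$ and we are done) the number of retrievals in $[1, t^\ast]$ equals the number $A(t^\ast)$ of arrivals in $[1, t^\ast]$; here I use \eqref{const12} and \eqref{const5c} to guarantee that the retrieved set is exactly the arrived prefix $c_1,\ldots,c_{A(t^\ast)}$. Consequently the idle steps, all lying in $[1, t^\ast]$, number at most $t^\ast - A(t^\ast)$, the count of arrival-free time-steps in $[1, t^\ast]$. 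Now I would invoke strict monotonicity $d_1 < \cdots < d_N$: at most one arrival occurs per time-step and exactly the $n-1$ arrivals $d_1, \ldots, d_{n-1}$ precede $d_n$, so the arrival-free steps in $[1, d_n - 1] \supseteq [1, t^\ast]$ number $(d_n - 1) - (n - 1) = d_n - n$. This yields $I \le d_n - n$.

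Combining, $t_n \le n + H + (d_n - n) = d_n + H \le d_n + H + (n - d_n)^+ = d_n + \delta^*_n$, which is the desired bound. Note that strict monotonicity also forces $d_n \ge n$, so $(n-d_n)^+ = 0$ and the inequality is in fact tight at $d_n + H$; the $(n-d_n)^+$ term is the form the bound takes as $H + \max(d_n,n)$ once one drops the strict-ordering assumption. The delicate point to get exactly right is the step $I \le t^\ast - A(t^\ast)$, which hinges on the retrieved-equals-arrived-prefix identity at $t^\ast$ and on counting arrivals one-per-step, both of which fail without strict monotonicity of the $d_n$.
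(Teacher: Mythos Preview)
Your proof is correct and follows essentially the same approach as the paper: invoke Lemma~\ref{Lemma1}, then bound the retrieval time $t_n$ by decomposing $[1,t_n]$ into retrievals (exactly $n$), relocations (at most $H$), and idle steps, with the idle-step bound $I \le d_n - n$ extracted from the no-idle-while-ready property. Your bookkeeping via the last idle step $t^\ast$ is a touch more elaborate than the paper's direct observation that each of the time-steps $d_1,\ldots,d_{n-1}\in[1,d_n-1]$ is non-idle, but the structure and conclusion are identical; your closing remark that strict monotonicity forces $(n-d_n)^+=0$ is also correct and usefully clarifies the role of that term.
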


\begin{proof}

In the time interval $[1,d_n+\delta_n]$, two types of moves (relocation and retrieval) or idle time-steps are possible. By definition, exactly $n$ containers must be retrieved by the end of time $d_n+\delta_n$. Moreover, $H$ provides an upper bound on the number of relocation moves ($H$ can be computed using existing fast heuristics).

To count the idle time-steps in time interval $[1,d_n+\delta_n]$, we use Lemma \ref{Lemma1} in the following way:
Suppose container $(c_n,d_n)$ is retrieved at time $\hat{d}_n \ge d_n$. The case that $\hat{d}_n = d_n$ is trivial because $\delta_{n}$ is zero, and any bound obtained for $\delta_n$ would be valid. 
If $\hat{d}_n > d_n$, then we know that in the time interval $[d_n,\hat{d}_n]$, container $c_n$ is available to be retrieved; so there is no idle time-steps in this interval. Also, in the time interval $[1,d_n-1]$ there are at least $n-1$ time-steps where a container is available to be retrieved, which means there are at most $d_n-1-(n-1)$ non-idle time-steps. Therefore, in the time interval $[1,d_n + \delta_n]$, there are at most $d_n-n$ idle time-steps (note that $\hat{d}_n = d_n + \delta_n$ because we have a solution in which $c_n$ is retrieved at $\hat{d}_n$). By counting the maximum number of relocations, retrievals, and the idle time-steps, we get an upper bound on the retrieval time window as follows:

\begin{equation}
d_n+\delta_n\le n+H+\max(d_n - n,0) \Rightarrow\delta_n \le H + (n - d_n)^{+}.
\end{equation}

Note that $\delta^*_n$ given by \eqref{T.W:UB} is the largest required time window. Also, it might be tight for some containers, but not for all.\\

\end{proof}

\subsection{The Dynamic container relocation problem}
\label{DCRP}
Up to this point we have assumed that all containers are already in the bay and we need only to retrieve them. In practice, the retrieval and stacking processes can overlap. In other words, at the same time that some containers are retrieved, some other containers need to be stacked in the bay; we refer to this problem as the dynamic container relocation problem (DCRP). It is not clear how to model the DCRP using retrieval and stacking order for the containers, because a single order cannot be specified for both tasks. Even defining two separate orders (for retrieval and stacking) does not work because first, it would not be clear which of the two tasks (retrieval or stacking) should be prioritized. Second, the stacking process is more flexible because typically a group of containers needs to be stacked and there is no priority for stacking one container before others. We can use this flexibility to reduce the number of relocations. In this Section, we extend the time-based model and formulation to the DCRP and  illustrate
how our model can find an optimal sequence of stacking and retrieval moves and take advantage of flexibility in the stacking process.

We define movement and configuration variables for the incoming containers in \eqref{eq:var3} and \eqref{eq:var5}, and add Constraints \eqref{const7} to capture the relationship between stacking variables and their corresponding configuration variables.

\begin{eqnarray}
s_{ijnt} & = & \left\{
    \begin{array}{l l}
      1 & \quad \textrm{if container $n$ is stacked in  [$i$,$j$] at time $t$,}\\
      0 & \quad \textrm{otherwise;}
    \end{array} \right.\label{eq:var3}\\
 z_{nt} &=& \left\{
   \begin{array}{l l}
     1 & \quad \textrm{if container $n$ has been stacked at time $t^{'}\in$\{1,...,$t$-1\}, }\\
     0 & \quad \textrm{otherwise;}
   \end{array} \right.\label{eq:var5} 
\end{eqnarray}

\begin{eqnarray}
&&z_{nt} =\sum\limits_{i=1}^{C}\sum\limits_{j=1}^{P}\sum\limits_{t^{'}=1}^{t-1}s_{ijnt^{'}},\quad n=1,...,N,\quad t=1,...,T. \label{const7}
\end{eqnarray}

We also replace Constraint \eqref{const1c} with Constraint \eqref{const1} to ensure that at each time-step, each container is either in the bay, outside the bay waiting to be stacked, or has been retrieved and delivered to an external truck.

\begin{eqnarray}
&& \sum\limits_{i=1}^{C}\sum\limits_{j=1}^{P}b_{ijnt}+v_{nt}=z_{nt},\quad n=1,...,N,\quad t=1,...,T.  \label{const1}
\end{eqnarray}

Moreover, we define $a_n$ as the arrival time of incoming container $n$ (the earliest time that it is ready to be stacked),  and $\alpha_n$ as the maximum amount of delay that is allowed in stacking container $n$, similar to $d_n$ and $\delta_n$ respectively for the arrival time of external trucks and retrieval time window. Note that the maximum number of required time-steps ($T$) is now given by \ref{T formula2}:

\begin{equation}
T = \max_{1 \leq n \leq N,~ 1 \leq m \leq N,~d_m < \infty} \{a_{n}+\alpha_n, d_{m}+\delta_m \}.\label{T formula2}\\
\end{equation}

We also add Constraints \eqref{const10} to \eqref{const11}. Constraints \eqref{const10} ensure that each container is stacked only after its scheduled arrival time. Constraints \eqref{const8} ensure that containers are stacked within their allowable time windows, and Constraints \eqref{const11} require that containers are not stacked at some time beyond their allowed time window. Moreover, we add stacking moves for updating bay configuration and replace Constraints \eqref{const6c} with Constraints \eqref{const5}. Lastly, Constraints \eqref{const4c} are replaced with Constraints \eqref{const4} to ensure that, at each time-step, at most one move (retrieval, staking, or relocation) is performed.

\begin{eqnarray}
&&\sum\limits_{i=1}^{C}\sum\limits_{j=1}^{P}\sum\limits_{t=1}^{a_{n}-1}s_{ijnt}=0,\quad n=1,...,N \label{const10}\\
&&\sum\limits_{i=1}^{C}\sum\limits_{j=1}^{P}\sum\limits_{t=a_{n}}^{a_{n}+\alpha_n}s_{ijnt}=1,\quad n=1,...,N \label{const8}\\
&&\sum\limits_{i=1}^{C}\sum\limits_{j=1}^{P}\sum\limits_{t=a_{n}+\alpha_n+1}^{T}s_{ijnt}=0,\quad n=1,...,N \label{const11}\\
&& b_{ijnt}=b_{ijnt-1}-\sum\limits_{k=1}^{C}\sum\limits_{l=1}^{P} x_{ijklnt-1} + \sum\limits_{k=1}^{C}\sum\limits_{l=1}^{P} x_{klijnt-1} - y_{ijnt-1}+s_{ijnt-1} \label{const5} \\
&&\quad \quad \quad \quad \quad  \quad \quad \quad \quad \quad \quad \quad \quad \quad i=1,...,C,\quad j=1,...,P,\quad    n=1,...,N,\quad t=2,...,T \nonumber\\
&&\sum\limits_{i,k=1}^{C}\sum\limits_{j,l=1}^{P}\sum\limits_{n=1}^{N} x_{ijklnt} + \sum\limits_{i=1}^{C}\sum\limits_{j=1}^{P}\sum\limits_{n=1}^{N}y_{ijnt} + \sum\limits_{i=1}^{C}\sum\limits_{j=1}^{P}\sum\limits_{n=1}^{N}s_{ijnt} \leq 1,\quad t=1,...,T. \label{const4}
\end{eqnarray}

Note that we do not need to define any order for stacking containers or sequencing stacking and retrieval moves; instead, we model the stacking and retrieval processes by specifying departure and arrival times, and time windows. Retrieval moves are performed in the predefined order (due to Constraints \eqref{const5c}). It is possible to enforce stacking moves to follow a predefined ordering as well; however, in the stacking process usually a batch of containers (as opposed to a single container) is discharged from a vessel and all are available to be stacked with no difference in their order. This is an important modelling issue that cannot be captured by specifying stacking order. In our proposed time-based model, we assign the same stacking time to all containers available for stacking at the same time and add Constraints \eqref{stack order} to respect the stacking order of groups of containers. The actual order in which the containers within a group are stacked is enforced by minimizing the number of relocations and total delay for the stacking and retrieval processes:

\begin{eqnarray}
\sum\limits_{t=1}^{T}z_{nt} \geq \sum\limits_{t=1}^{T}z_{n^{\prime}t},\quad \forall 
n, n^{\prime} \quad s.t. \quad a_{n^{\prime}} = \min\{a_i|a_i > a_n\}. \label{stack order}
\end{eqnarray}

For the DCRP, we add stacking delay and its corresponding weight to the objective function, the general form of which is given as follows:

\begin{eqnarray}
w_{rel}\sum\limits_{i,k=1}^{C}\sum\limits_{j,l=1}^{P}\sum\limits_{n=1}^{N}\sum\limits_{t=1}^{T}x_{ijklnt}- w_{s}\sum\limits_{n=1}^{N}\sum\limits_{t=a_n}^{T}z_{nt} - w_{r}\sum\limits_{n=1}^{N}\sum\limits_{t=d_n}^{T}v_{nt}.\label{general objective}
\end{eqnarray}

In Appendix C, we illustrate a case, applying the above model to find the best slots for stacking containers in a $3\times4$ bay such that the sum of total delay and number of relocations is minimized. Complete formulation of the DCRP is provided in Appendix D.\\

Throughout this section, our model and formulation assumed a first-come-first-served policy (similar to existing models in the literature and typical practices in container terminals). In the next section, we relax this assumption and modify the formulation to allow for out-of-order retrievals. We show that this retrieval policy helps reduce the number of relocations and retrieval delay, and therefore benefits both port operators and customers.
\section{Flexible Retrieval Planning}
\label{sec 4}

The formulation presented in Section \ref{formulation} requires that containers be retrieved in a pre-defined order, a first-come-first-served policy, which is enforced by Constraints \eqref{const5c}. 
In this section, we examine the FCFS in the retrieval process and propose a more general class of service policies which allows for flexibility in the order of retrievals.

Suppose that there are a number of trucks in the storage yard waiting in a queue for containers to be retrieved and delivered to them. If the trucks are served based on a FCFS policy, the containers have to be retrieved in the order dictated by the arrival order of the trucks. As an alternative to FCFS, we introduce the following flexible retrieval policy:
Suppose a container that needs to be relocated during the retrieval process (because it is blocking a target container) is ready to depart, i.e., its corresponding external truck has arrived and is waiting somewhere in the queue. An alternative to relocating that container is to retrieve it \emph{out-of-order}. We refer to this as a \textit{flexible retrieval planning} policy. To avoid inequity and customer dissatisfaction, the level of flexibility should be defined and controlled to limit the number of out-of-order retrievals.

We define the level of flexibility, $m$, as follows: for any container, $(c_n,d_n)$, at most $m$ containers whose departure time is greater than $d_n$, can be retrieved before $c_n$. 
For example if $m=1$, before retrieving container $(c_n,d_n)$, at most one container whose departure time is greater than $d_n$ may be retrieved. By definition, the FCFS policy is equivalent to $m=0$.

We add the following constraints to allow for out-of-order retrievals and set the level of flexibility $m$ in the model (all other constraints remain the same as before.). For a bay with $N$ containers and $d_1<d_2<\dots<d_N$, Constraints \eqref{flexible ret} ensure that at most $n-1+m$ retrievals are performed before retrieving container $(d_n,c_n)$. The following proposition shows that if Constraints \eqref{flexible ret} are satisfied, each external truck experiences at most $m$ out-of-order retrievals. 

\begin{eqnarray}
(t-m-n)\sum\limits_{i=1}^{C}\sum\limits_{j=1}^{P}y_{ijnt} + \sum\limits_{i=1}^{C}\sum\limits_{j=1}^{P}\sum\limits_{n=1}^{N}\sum\limits_{t^\prime=1}^{t-1}y_{ijnt^\prime}\le t-1, \nonumber \\
\quad t=1,\dots,T,\quad n=1,\dots,N \quad \textrm{and} \quad d_1<d_2<\dots<d_N. 
\label{flexible ret}
\end{eqnarray}

\begin{proposition}
\label{pro_flex}
For a CRP with $N$ containers, suppose that $d_1 < d_2 < \dots < d_N$. If before each container, $(c_n,d_n)$, at most $(n-1)+m$ containers are retrieved, then the number of out-of-order retrievals before each container is at most $m$.
\label{m out}
\end{proposition}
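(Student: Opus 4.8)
The plan is to translate the statement into a purely combinatorial claim about the retrieval order and then settle it by a counting (pigeonhole) argument. Since at most one move is performed per time-step (Constraint \eqref{const4c}), the retrievals occur one at a time; let $p_n$ denote the rank of container $(c_n,d_n)$ in the retrieval sequence, so that $c_n$ is the $p_n$-th container to leave the bay. The hypothesis that at most $(n-1)+m$ containers are retrieved before $c_n$ is then exactly $p_n - 1 \le (n-1)+m$, i.e. $p_n \le n+m$ for every $n$. Because $d_1 < d_2 < \dots < d_N$, a container $c_{n'}$ departs later than $c_n$ precisely when $n' > n$; hence the out-of-order retrievals occurring before $c_n$ are exactly the members of $A_n := \{\, n' > n : p_{n'} < p_n \,\}$, and the goal becomes $|A_n| \le m$ for every $n$.

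The crucial observation is that the per-index hypothesis, applied simultaneously to all indices up to $n$, confines the early containers to a short window: for every $n' \le n$ we have $p_{n'} \le n' + m \le n+m$, so the $n$ containers $c_1,\dots,c_n$ all occupy positions inside $\{1,\dots,n+m\}$. I would then fix $n$ and partition the $p_n-1$ containers retrieved before $c_n$ by index: let $l$ be the number of them with index $< n$ and let $H = |A_n|$ be the number with index $> n$, so that $H + l = p_n - 1$. The remaining low-index containers, namely those of index $< n$ retrieved after $c_n$, number $L = (n-1) - l$, and each sits at a position in $\{p_n + 1,\dots,n+m\}$ by the window bound, forcing $L \le (n+m) - p_n$.

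Combining these relations yields the result in one line: $H = (p_n - 1) - l = (p_n - 1) - (n - 1 - L) = (p_n - n) + L \le (p_n - n) + \bigl((n+m) - p_n\bigr) = m$, as desired.

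I expect the main obstacle to be conceptual rather than computational: the natural first attempt, bounding the out-of-order count as (total retrieved before $c_n$) minus (in-order retrievals before $c_n$) and using that the in-order count is at most $n-1$, produces only a \emph{lower} bound on $|A_n|$ and therefore points in the wrong direction. The resolution is to recognize that the hypothesis for index $n$ alone is too weak and that one must invoke it for all smaller indices to obtain the window bound $p_{n'} \le n+m$; it is precisely this confinement of $c_1,\dots,c_n$ to the first $n+m$ positions that caps how many low-index containers can be pushed past $c_n$, and dually how many high-index containers can slip in before it.
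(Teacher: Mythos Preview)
Your argument is correct. Both your proof and the paper's hinge on the same counting idea---controlling how many of the low-index containers $c_1,\dots,c_{n-1}$ can land before versus after $c_n$---but you package it as a direct inequality while the paper proceeds by contradiction. The paper assumes some $c_n$ has $m{+}1$ out-of-order retrievals, deduces that at most $n-2$ of $c_1,\dots,c_{n-1}$ precede $c_n$, selects one low-index container retrieved after $c_n$, and argues that the total number of retrievals preceding that container exceeds what the hypothesis permits. Your version instead applies the hypothesis uniformly to every $n'\le n$ to obtain the window bound $p_{n'}\le n+m$, and then reads off $H\le m$ from the identity $H=(p_n-n)+L$ together with $L\le (n+m)-p_n$. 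This direct route is slightly cleaner and sidesteps the paper's ``without loss of generality'' choice of $c_{n-1}$ as the late low-index container, a step that in fact needs a more careful selection of the index (one small enough relative to $l$) for the final count to go through.
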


\begin{proof}
See Appendix C.
\end{proof}

Example \ref{ex.flex.ret} below compares the number of relocations, retrieval delay for each container, and total delay, for the two cases (with $m=0$ and $m=1$).

\begin{example} \label{ex.flex.ret}
\normalfont
\textbf{Out-of-order retrieval:}
Consider a $3\times4$ bay with 9 containers with the initial configuration given in Figure \ref{fig:example.flex.ret}. The arrival schedule of the external trucks that pick up the containers is given by $S=\{1,2,3,4,5,6,7,8,9\}$.

\begin{figure}[htb]
\centering
\begin{tabular}{ |c| c| c|}
\hline
& & \\
\hline
$(c_6,6)$&$(c_3,3)$&$(c_4,4)$\\
\hline
$(c_5,5)$&$(c_2,2)$&$(c_1,1)$\\
\hline
$(c_8,8)$&$(c_7,7)$&$(c_9,9)$\\
\hline
\end{tabular}
\caption{Initial configuration of the bay for Example \ref{ex.flex.ret}}
\label{fig:example.flex.ret}
\end{figure}

We solve this instance for two cases of $m=0$ and $m=1$ and compare the sequence of moves as shown in Figure \ref{ex.flex.ret.solution0} and \ref{ex.flex.ret.solution1}. For simplicity, the departure times of containers are not shown in the sequence of moves, and the containers are indicated by their indices.

\begin{figure}[htb!]
\begin{subfigure}[t]{1\textwidth}
\centering
\begin{tabular}{ |c| c| c| c| c | c| c| c| c| c| c |c| c| c| c| c | c| c| c| c| c| c| c| c}
\cline{1-3}\cline{5-7}\cline{9-11}\cline{13-15}\cline{17-19}\cline{21-23}
& & & & 4 & & & & 4 & & & & 4 & & & & 4 & & & & \rlap{/}{4} & & &\\
\cline{1-3}\cline{5-7}\cline{9-11}\cline{13-15}\cline{17-19}\cline{21-23}
$6$ &$3$ &$4$ &  &6 &3 & & & 6& 3& &  & 6& & &  & 6& & &  &6& & &   \\
\cline{1-3}\cline{5-7}\cline{9-11}\cline{13-15}\cline{17-19}\cline{21-23}

5& 2& 1& $\rightarrow$ & 5& 2& \rlap{/}{1}&$\rightarrow$ & 5& 2& & $\rightarrow$& 5& \rlap{/}{2}& 3&$\rightarrow$
 & 5& & \rlap{/}{3}&$\rightarrow$ & 5& & &   \\
\cline{1-3}\cline{5-7}\cline{9-11}\cline{13-15}\cline{17-19}\cline{21-23}
8& 7& 9& & 8& 7& 9& & 8& 7& 9& & 8& 7& 9& & 8& 7& 9& & 8& 7& 9& \\
\cline{1-3}\cline{5-7}\cline{9-11}\cline{13-15}\cline{17-19}\cline{21-23}
\cline{1-3}\cline{5-7}\cline{9-11}\cline{13-15}\cline{17-19}\cline{21-23}
\multicolumn{3}{c}{$t=1$} & \multicolumn{1}{c}{ } & \multicolumn{3}{c}{$t=2$} & \multicolumn{1}{c}{} & \multicolumn{3}{c}{$t=3$} & \multicolumn{1}{c}{ }& \multicolumn{3}{c}{$t=4$} & \multicolumn{1}{c}{} & \multicolumn{3}{c}{$t=5$} & \multicolumn{1}{c}{ }& \multicolumn{3}{c}{$t=6$}& \\
\multicolumn{20}{c}{ } & \multicolumn{3}{c}{$\downarrow$}& \\
\end{tabular}

\begin{tabular}{ |c| c| c| c| c | c| c| c| c| c| c |c| c| c| c| c | c| c| c| c| c| c| c| c }

\cline{1-3}\cline{5-7}\cline{9-11}\cline{13-15}\cline{17-19}\cline{21-23}
\textcolor{white}{1}&\textcolor{white}{1}&\textcolor{white}{1}&&\textcolor{white}{1}&\textcolor{white}{1}&
\textcolor{white}{1}&\textcolor{white}{1}&& & & & &  & &   & & & &  & & & &   \\

\cline{1-3}\cline{5-7}\cline{9-11}\cline{13-15}\cline{17-19}\cline{21-23}
 && & &&&&&&&&{\textcolor{white}{1}} & \textcolor{white}{1} & \textcolor{white}{1} & 
& \textcolor{white}{1}  & & & &  & 6 &  & &   \\

\cline{1-3}\cline{5-7}\cline{9-11}\cline{13-15}\cline{17-19}\cline{21-23}
&&& $\leftarrow$  &  & && $\leftarrow$& &&& $\leftarrow$ & &\rlap{/}{6}& &$\leftarrow$ & \rlap{/}{5}&6&& $\leftarrow$&5&&& \\

\cline{1-3}\cline{5-7}\cline{9-11}\cline{13-15}\cline{17-19}\cline{21-23}

 &&\rlap{/}{9}& & \rlap{/}{8}  &  & 9 & 
& 8  &\rlap{/}{7} & 9& & 8 & 7 &  9 & & 8& 7& 9 & & 8& 7&9 &\\

\cline{1-3}\cline{5-7}\cline{9-11}\cline{13-15}\cline{17-19}\cline{21-23}
\multicolumn{3}{c}{$t=12$} & \multicolumn{1}{c}{} & \multicolumn{3}{c}{$t=11$} & 
\multicolumn{1}{c}{ }& \multicolumn{3}{c}{$t=10$} & \multicolumn{1}{c}{} & \multicolumn{3}{c}{$t=9$} &
\multicolumn{1}{c}{ }& \multicolumn{3}{c}{$t=8$} &\multicolumn{1}{c}{ }& \multicolumn{3}{c}{$t=7$} &\\
\end{tabular}
\caption{Sequence of moves for the bay of Example \ref{ex.flex.ret}: inflexible case ($m=0$, FCFS)}
\label{ex.flex.ret.solution0}
\end{subfigure}

\begin{subfigure}[t]{1\textwidth}
\centering
\begin{tabular}{ c c c}
&&\\
\end{tabular}
\\
\begin{tabular}{ |c| c| c| c| c | c| c| c| c| c| c |c| c| c| c| c | c| c| c| c| c| c| c|}
\cline{1-3}\cline{5-7}\cline{9-11}\cline{13-15}\cline{17-19}\cline{21-23}
& & & & 4 & & & & 4 & & & & 4 & & & & \rlap{/}{4} & & & &  & & \\
\cline{1-3}\cline{5-7}\cline{9-11}\cline{13-15}\cline{17-19}\cline{21-23}
6&3&4 & & 6&3& & & 6&\rlap{/}{3}& & & 6&& & & 6&& & & \rlap{/}{6}&&  \\
\cline{1-3}\cline{5-7}\cline{9-11}\cline{13-15}\cline{17-19}\cline{21-23}
5&2&1& $\rightarrow$ & 5&2&\rlap{/}{1}& $\rightarrow$ & 5&2&& $\rightarrow$ & 5&\rlap{/}{2}&& $\rightarrow$ & 5&&& $\rightarrow$ & 5&& \\
\cline{1-3}\cline{5-7}\cline{9-11}\cline{13-15}\cline{17-19}\cline{21-23}
8&7&9 & & 8&7&9 & & 8&7&9 & & 8&7&9 & & 8&7&9 & & 8&7&9  \\
\cline{1-3}\cline{5-7}\cline{9-11}\cline{13-15}\cline{17-19}\cline{21-23}
\multicolumn{3}{c}{$t=1$}& \multicolumn{1}{c}{} & \multicolumn{3}{c}{$t=2$} & \multicolumn{1}{c}{} & \multicolumn{3}{c}{$t=3$} & \multicolumn{1}{c}{ }& \multicolumn{3}{c}{$t=4$} & \multicolumn{1}{c}{} & \multicolumn{3}{c}{$t=5$} & \multicolumn{1}{c}{ }& \multicolumn{3}{c}{$t=6$} \\
\multicolumn{20}{c}{ } & \multicolumn{3}{c}{$\downarrow$} \\
\end{tabular}

\begin{tabular}{ |c| c| c| c| c | c| c| c| c| c| c |c| c| c| c| c | c| c| c|c | c| c| c|}
\cline{9-11}\cline{13-15}\cline{17-19}\cline{21-23}
\multicolumn{8}{c|}{\textcolor{white}{1}}  && & & &  & &   & & & &  & & & &   \\

\cline{9-11}\cline{13-15}\cline{17-19}\cline{21-23}
\multicolumn{8}{c|}{\textcolor{white}{1}} & {\textcolor{white}{1}} & \textcolor{white}{1} & \textcolor{white}{1} & 
& \textcolor{white}{1}  & &   & & & &  & & & &   \\

\cline{9-11}\cline{13-15}\cline{17-19}\cline{21-23}
\multicolumn{8}{c|}{\textcolor{white}{1}}  & &&& $\leftarrow$ & &&& $\leftarrow$ & &&& $\leftarrow$ & \rlap{/}{5}&& \\

\cline{9-11}\cline{13-15}\cline{17-19}\cline{21-23}

\multicolumn{8}{c|}{\textcolor{white}{1} } &  &{\textcolor{white}{7}} & \rlap{/}{9}& & \rlap{/}{8} & {\textcolor{white}{7}} &  9 & & 8& \rlap{/}{7}& 9 & & 8& 7&9   \\

\cline{9-11}\cline{13-15}\cline{17-19}\cline{21-23}
\multicolumn{4}{c}{\textcolor{white}{111111111111111111111}} & \multicolumn{3}{c}{} & \multicolumn{1}{c}{} & \multicolumn{3}{c}{$t=10$} & 
\multicolumn{1}{c}{ }& \multicolumn{3}{c}{$t=9$} & \multicolumn{1}{c}{} & \multicolumn{3}{c}{$t=8$} &
\multicolumn{1}{c}{ }& \multicolumn{3}{c}{$t=7$} \\
\end{tabular}
\caption{Sequence of moves for the bay of Example \ref{ex.flex.ret}: flexible case ($m=1$)}
\label{ex.flex.ret.solution1}
\end{subfigure}
\caption{ }
\label{temp}
\end{figure}

The first two moves for both cases are relocating container $c_4$ to the first column and retrieving container $c_1$ from slot $[3,2]$. Next, there are two decisions that are made differently in the case of $m=0$ and $m=1$:

$1$. After retrieving $c_1$, container $c_2$, which is covered by $c_3$, is next to be retrieved if a FCFS policy is used. But $c_3$ can be retrieved and loaded onto its truck (because 2 moves have been performed, it is time $3$). In the case of $m=0$, $c_3$ is relocated to the third column. In the case of $m=1$, $c_3$ is retrieved out-of-order (before container $c_2$).

$2$. After retrieving $c_4$, container $c_5$ is next to be retrieved in a FCFS policy. But $c_6$ can be retrieved, because 5 moves (in the case of $m=1$) or 6 moves (in the case of $m=0$) have already been performed and it is time $6$ or $7$. In the case of $m=0$, container $c_6$ is relocated to the third column. In the case of $m=1$, $c_6$ is retrieved out-of-order (before container $c_5$).

The results of these two different decisions are that the number of relocations, total delay, and individual delays (denoted by $w_i$ in Table \ref{ex.flex.rexults}) are less in the case of $m=1$. Note that the number of out-of-order retrievals before each container is at most 1 (although the total number of such retrievals is 2 in this example). In fact, in this example, the trucks that pick up containers $c_1$, $c_3$, $c_4$, $c_6$, $c_7$, $c_8$, and $c_9$ do not experience any out-of-order retrievals. Another important point to note is that even for the trucks that experience 1 out-of-order retrieval (that is, trucks that pick up containers $c_2$ and $c_5$ ), their wait time decreases by 1 and 2 units, respectively.
 \end{example}


Table \ref{ex.flex.rexults} summarizes the results for two levels of retrieval process flexibility. Individual delays ($w_1$ to $w_9$) denote the wait time for trucks picking up containers $c_1$ to $c_9$. The total number of relocations and total delay drop by a significant amount (67\% and 32\%, respectively) when we allow for out-of-order retrieval. Moreover, each truck has a shorter wait time when such a policy is in effect.

\begin{table}[htb!]
\begin{center}
\begin{tabular}{ l c  c  c c  c  c c  c  c c   c}
\hline
 & Total  & Total  & \multicolumn{9}{c}{Individual delay} \\
 & relocations & delay & \quad $w_1$ &  $w_2$ &  $w_3$ &  $w_4$ &  $w_5$ &  $w_6$ &  $w_7$ &  $w_8$ &  $w_9$\\
 
 \hline
 $m=0$ & 3 & 22 & 1	& 2	& 2	& 2	& 3	& 3	& 3	& 3	& 3\\

 $m=1$ & 1 & 9 & 1	& 2	& 0	& 1	& 2	& 0	& 1& 1	& 1\\
 
\hline
 
\end{tabular}
\end{center}
\caption{{\protect {Number of relocations and delay for two levels of flexibility in Example \ref{ex.flex.ret} }}}
\label{ex.flex.rexults}
\end{table}

In general, we can always obtain an improved or the same quality solution by allowing out-of-order retrievals. As shown in proposition \ref{pro1}, given any feasible sequence of moves, we can construct another sequence of moves that has the same or fewer relocations and the same or less retrieval delay, if any out-of-order retrieval is possible.

\begin{proposition} \label{pro1}

Suppose $M$ is a feasible sequence of moves for CRP with $m \geq 0$ out-of-order retrievals. Given $M$, we can construct a new sequence of moves, $M'$, that is feasible for CRP with $m+1$ out-of-order retrievals and,

\begin{description}

  \item[(a)] The total number of relocations in $M'$ is at most the number of relocations in $M$; and

  \item[(b)] The delay of all containers for $M'$ is at most the delay in $M$.

\end{description}

\end{proposition}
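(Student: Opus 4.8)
The plan is to take the single extra out-of-order retrieval we are permitted and \emph{spend} it on a relocation that $M$ already performs on a ready-to-depart container. If $M$ never relocates a container whose truck has already arrived, then no out-of-order retrieval is possible and we may simply take $M' = M$: it is feasible for level $m+1$ because it is feasible for level $m$, and (a),(b) hold with equality. Otherwise, I fix a time-step $t_1$ at which $M$ relocates some container $c_j$ with $d_j \le t_1$, and let $t_2 > t_1$ be the time at which $c_j$ is retrieved in $M$. I then define $M'$ by keeping the first $t_1-1$ moves unchanged, replacing the relocation of $c_j$ at $t_1$ by the retrieval $[i,j]\to out$ of $c_j$ from its (top-of-column) slot, and, for every later time-step, replaying the corresponding move of $M$ but with $c_j$ absent from the bay.

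The crux is to show that this replay is well defined and yields a feasible configuration sequence, which I would establish by maintaining the invariant that at every time-step $t \ge t_1$ the configuration in $M'$ equals that of $M$ with $c_j$ deleted and its column compacted downward by one tier. I check the invariant move-by-move: any later relocation of $c_j$, and the retrieval of $c_j$ at $t_2$, become idle steps in $M'$, and since removing $c_j$ only lowers the column that held it, the invariant survives; a relocation or retrieval of any other container $c_q$ is replayed by the analogous move taking $c_q$ from the top of its column to the top of the destination column, which is well defined because $c_q$ remains the topmost container of its column in $M'$ and the destination's top is simply one tier lower whenever $c_j$ lay in it. The only constraint that could conceivably fail is the height bound $P$ at a relocation destination, but column heights in $M'$ never exceed those in $M$, so every such constraint continues to hold. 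This book-keeping, rather than any single inequality, is the main obstacle.

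Granting the invariant, parts (a) and (b) are immediate. For (a), $M'$ drops the relocation at $t_1$ and turns every later relocation of $c_j$ into an idle step, so the number of relocations decreases by at least one. For (b), every container other than $c_j$ is retrieved at exactly the same time-step in $M'$ as in $M$ (its move is replayed in place, with idle steps where $c_j$'s moves were), so its delay is unchanged, while $c_j$ is retrieved at $t_1 < t_2$, strictly reducing its delay; hence no container's delay increases. Time-window feasibility is preserved for the same reason, since retrievals only move earlier.

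Finally, to certify that $M'$ is feasible for flexibility level $m+1$, I note that the multiset of retrievals and their relative order is identical to that of $M$ except that the single retrieval of $c_j$ is moved earlier. Thus for any fixed container $c_n$, its count of out-of-order retrievals (containers with departure time exceeding $d_n$ that are retrieved before it) can change only through the presence or absence of the one container $c_j$ in that set, and so increases by at most one, while $c_j$'s own count can only decrease. Since $M$ satisfies the bound of $m$ out-of-order retrievals before every container (Proposition \ref{pro_flex}), $M'$ satisfies the bound of $m+1$, completing the construction.
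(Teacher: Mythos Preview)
Your construction is correct and follows essentially the same route as the paper: pick a relocation in $M$ of a container that is already ready to depart, convert it to a retrieval, and replace all later moves involving that container by idle steps, adjusting destination tiers downward as needed. The paper's proof is terser on the configuration bookkeeping (your invariant argument is more explicit), and it builds the $m{+}1$ bound into the selection of which relocation to convert via an extra condition, whereas you argue cleanly after the fact that moving a single retrieval earlier can raise any container's out-of-order count by at most one; both routes arrive at the same feasibility claim.
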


\begin{proof}
The procedure for constructing $M'$, similar to that illustrated in Example \eqref{ex.flex.ret}, is to retrieve a container that needs to be relocated but is ready to depart.

Suppose the total number of relocations in $M$ is $R$. For all $1 \le r \le R$, we define $t(r)$ as the time of the $r^{th}$ relocation, and $n(r)$ as the index of the container that is relocated at $t(r)$. Also, we denote the number of out-of-order retrievals that take place before retrieving container $c_n$, by $\sigma(c_n)$.

To construct $M'$, we replace each relocation $1 \le r \le R$ with retrieval, whenever the following two conditions are satisfied: \textbf{i.} $d_{n(r)}\le t(r) $; and \textbf{ii.} $\sigma(c_{i}) < m+1,$  for any container $(c_i,d_i)$  such that $d_i<d_{n(r)}$ and $c_i$ is still in the bay. In other words, if a relocated container is ready to depart (condition \textbf{i}), we retrieve it rather than relocate it, provided that the number of out-of-order retrievals for each container does not exceed $m+1$ (condition \textbf{ii}). 
Suppose relocation $\hat{r}$ satisfies condition \textbf{i} and \textbf{ii}; thus we retrieve $c_{n(\hat{r})}$ at $t(\hat{r})$. For the remaining moves in the sequence (after $t_{\hat{r}}$), we form $M'$ by replacing any moves that involve $n(\hat{r})$ with an idle time-step.

Now we show that $M'$ is a feasible solution with $m+1$ out-of-order retrievals. In terms of retrieval times, the only difference between $M$ and $M'$ is that some containers are retrieved earlier; for such containers, condition \textbf{i} ensures that they are retrieved only after their departure time. For the resulting configuration to remain feasible at each time-step, adjustments might be necessary. For example, suppose that there is a move in $M$ at $t>t(\hat{r})$ to relocate a container to slot $[i,j]$, which happens to be on top of $n({\hat{r}})$. In $M'$, we perform the same move except that we relocate the container to slot $[i,j-1]$, because container $n({\hat{r}})$ in $M'$ is not in the bay after time $t(\hat{r})$. Similarly, some of the retrieval moves in $M'$ will be from one slot lower. Finally, condition \textbf{ii} ensures that the number of out-of-order retrievals for each container does not exceed $m+1$.

Next, we show that statement $(a)$ and $(b)$ hold for $M'$:
If there exist at least one $\hat{r}$ that satisfies \textbf{i} and \textbf{ii}, then the number of relocations is decreased by at least one. Moreover, because all containers in $M'$ are retrieved earlier or at the same time as in $M$, the delay of each container is at most its delay in $M$. 

Finally, using Lemma \eqref{Lemma1}, some of the idle times in $M'$ can possibly be eliminated; thus, the delay of some containers in $M'$ can be reduced below the delay in $M$.

\end{proof}

In the next section, we study computationally the effect of out-of-order retrievals on the number of relocation moves and retrieval delay.

\subsection{Computational Experiments for Flexible Retrieval Planning}
\label{sec 4.1}

To evaluate the out-of-order retrieval policy, we perform numerical experiments using the integer program introduced in Section \ref{formulation}. The main goal of the computational experiments is to study the following questions: What is the quantitative impact of an out-of-order retrieval policy on the number of relocation moves and on the average retrieval delay?; how does the improvement change as we vary bay size?; and what is the impact of an out-of-order retrieval policy on service equity?

\paragraph{Effect of an out-of-order retrieval policy on the number of relocations and on delay:}
To study the impact on the number of relocations and on delay, we solve random instances for five different bay sizes with $4$, $5$, $6$, $7$, and $8$ columns. For each bay size, there are 4 tiers in each column. The initial configuration of each random instance is such that the first 3 tiers are full and there is no container in the top tier; so there are initially $12$, $15$, $18$, $21$ and $24$ containers in the bay, respectively. For each of the 5 bay sizes, we solve the random instances with different levels of flexibility, setting the parameter $m$ in constraint \eqref{flexible ret} to 0, 1, and 2. Changing the bay size and the level of flexibility, there are $15$ different test settings. We solve $1000$ random instances for each test setting and take the average over the instances to compute the number of relocations and average delay (per truck). 

Figure \ref{m1m2} shows the results of our numerical experiments. The bay size (the number of columns in the bay) is shown on the horizontal axis. The percent decrease in the total number of relocations is shown in Figure \ref{ave rel}, and the percent decrease in the average delay per truck is shown in Figure \ref{ave delay}. In both plots, the percent decrease is computed relative to the base ($m=0$, the FCFS policy) for two levels of flexibilitys.

\begin{figure}[htb!]
\begin{subfigure}[t]{0.5\textwidth}
\centering
\includegraphics[width=1.1 \textwidth]{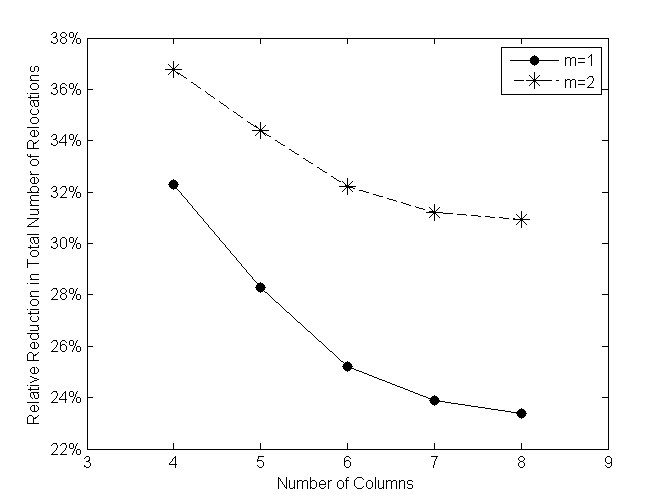}
\caption{ }
\label{ave rel}
\end{subfigure}
\hspace{0.03 in}
\begin{subfigure}[t]{0.5\textwidth}
\includegraphics[width=1.1 \textwidth]{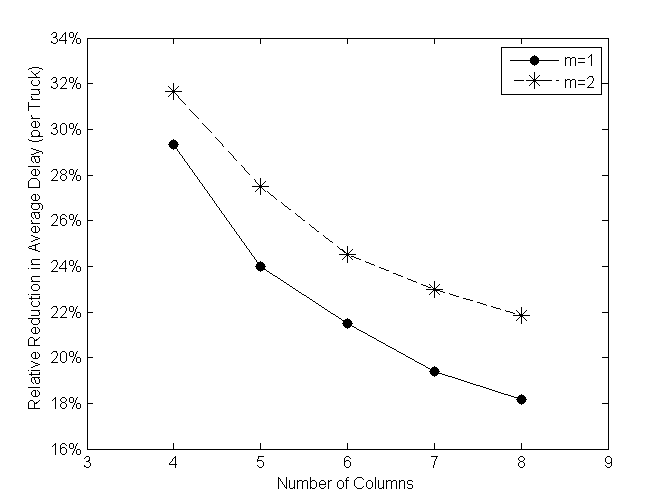}
\caption{ }
\label{ave delay}
\end{subfigure}
\caption{Effect of an out-of-order retrieval policy on the total number of relocations and on average delay}
\label{m1m2}
\end{figure}

As can be seen in Figure \ref{ave rel}, 1 out-of-order retrieval reduces the number of relocation moves by $32\%$ in a bay of size $4\times4$ and by $23\%$ in a bay of size $4\times8$. Note that the decrease in the number of relocations depends on the bay size. The percent decrease becomes smaller as the bay gets larger, but with a decreasing rate. To understand why, note that an out-of-order retrieval of container $(c_n,d_n)$ has two benefits: first, we avoid relocating $c_n$, and thus reduce the total number of relocations by 1. Second, we avoid future relocations that might be incurred by relocating $c_n$ (if we relocate $c_n$ to a ``bad" column that has a container with higher priority, we need to relocate $c_n$ again later). When the bay gets larger, it becomes more likely that we find a ``good" column (a column that has no container with departure time smaller than $d_n$ ) for relocating $c_n$. Therefore, the beneficial effect of out-of-order retrieval diminishes as the bay gets larger.

The dotted line in Figure \ref{ave rel} illustrates that allowing for 2 out-of-order retrievals has a larger effect on the number of relocations. The relative decrease is $37\%$ and $31\%$ for a bay of size $4\times4$ and $4\times8$, respectively.

Figure \ref{ave delay} depicts the impact of an out-of-order retrieval policy on the average delay (per truck). A similar trend can be seen for the relative decrease in the average delay. Allowing for 1 out-of-order retrieval results in a $32\%$ decrease for a $4\times4$ bay and a $22\%$ decrease for a $4\times8$ bay. 

\paragraph{Effect of an out-of-order retrieval policy on service equity:}
 When a FCFS is replaced by an out-of-order retrieval policy, truck drivers might perceive service to be unfair when another truck is served out-of-order before them.
However, a truck can receive its container out-of-order or can experience an out-of-order retrieval before it is served.
To formalize this, we compute the number of out-of-order retrievals performed before serving a particular truck as follows:

\begin{figure}[htb!]
\begin{subfigure}[t]{0.5\textwidth}
\centering
\includegraphics[width=1 \textwidth]{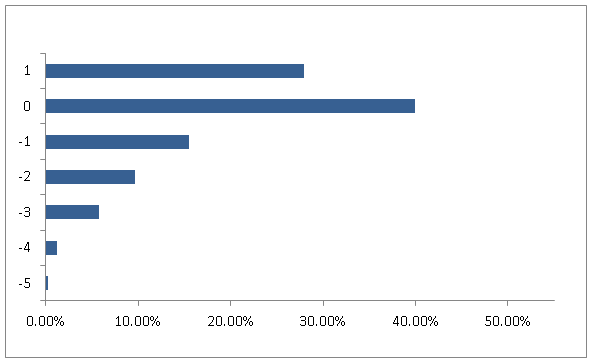}
\caption{$4\times4$ bay; $m=1$}
\label{fig:4_4_1}
\end{subfigure}
\hspace{0.05 in}
\begin{subfigure}[t]{0.5\textwidth}
\includegraphics[width=1 \textwidth]{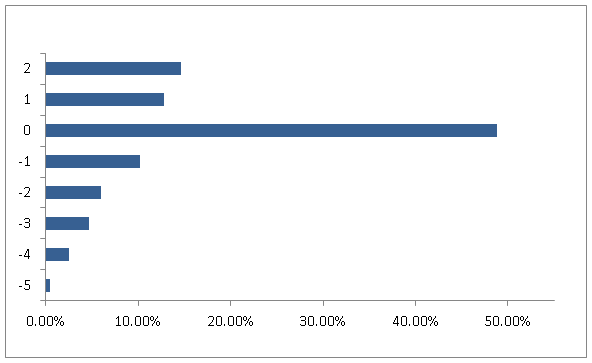}
\caption{$4\times4$ bay; $m=2$}
\label{fig:4_4_2}
\end{subfigure}

\vspace{0.05 in}

\begin{subfigure}[t]{0.5\textwidth}
\centering
\includegraphics[width=1 \textwidth]{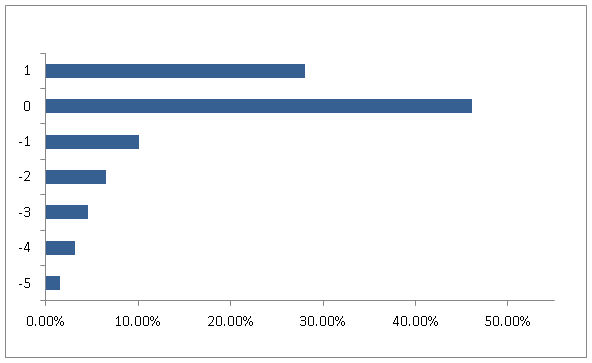}
\caption{$8\times4$ bay; $m=1$}
\label{fig:4_8_1}
\end{subfigure}
\hspace{0.05 in}
\begin{subfigure}[t]{0.5\textwidth}
\includegraphics[width=1 \textwidth]{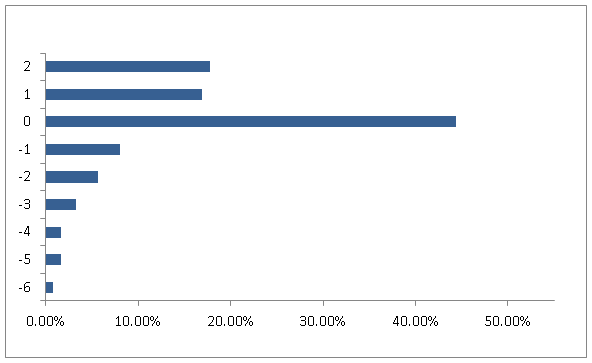}
\caption{$8\times4$ bay; $m=2$}
\label{fig:4_8_2}
\end{subfigure}
\caption{Long term histogram of out-of-order retrieval moves performed before truck $r$ is served: negative numbers indicate that truck $r$ is served out-of-order; positive numbers indicate that other truck(s) are served out-of-order; zero indicates that neither of these cases happen}\label{fig:equity}

\end{figure}

Figure \ref{fig:equity} shows the distribution of the number of out-of-order retrievals for truck $r$, for two bay sizes (a small bay with 4 columns and a large bay with 8 columns), and two levels of flexibility ($m=1$ and $m=2$). The integer numbers on the vertical axis show the number of out-of-order retrieval moves that truck $r$ experiences in the long term and the horizontal axis shows the frequency of each case happening. In a flexible retrieval process, three cases can occur when truck $r$ is served: 

${\bullet}$ Truck $r$ is not served out-of-order, and does not experience any out-of-order retrieval. This case corresponds to number zero on the vertical line.

${\bullet}$ One or more trucks are served out-of-order before truck $r$ (a truck that is served out-of-order and also experiences some out-of-order retrievals is also classified in this group). This case corresponds to the positive numbers on the vertical axis. The positive numbers indicate the number of trucks that are served out-of-order; note that the maximum of these positive numbers depends on $m$. For example in Figures \ref{fig:4_4_1} and \ref{fig:4_8_1}, only $+1$ is on the vertical axis because $m=1$. In Figures \ref{fig:4_4_2} and \ref{fig:4_8_2}, there are two positive values ($+1$ and $+2$) because $m=2$. 

${\bullet}$ Truck $r$ is served out-of-order. For example it is the $n^{th}$ truck in the line, but is served as the $m^{th}$ truck where $m<n$. This case corresponds to negative numbers on the vertical line. The value of the negative numbers indicates how much earlier truck $r$ is served. For example the case that truck $r$ is the $n^{th}$ truck and is served as the $m^{th}$ truck, corresponds to $m-n$. Similarly, the case that truck $r$ is served as the third truck is indicated by value $-2$.

As shown in Figure \ref{fig:equity}, more than 40\% of the time truck $r$ does not experience out-of-order service for another truck and is also not served out-of-order.

Moreover,, the frequency that truck $r$ is served out-of-order (negative numbers) is almost the same as the frequency that it experiences out-of-order retrievals (positive numbers), when $m=1$. These experiments imply that in the long term, the service that each truck receives is not adversely affected by of the out-of-order retrieval policy. Further, as shown in Figure \ref{ave delay}, the average retrieval delay (per truck) is decreased as a result of out-of-order retrievals.

\section{Concluding Remarks}
\label{discussion}
In the storage yard of container terminals, the import containers are retrieved from the bays to be delivered to the external trucks (retrieval process), and, the export containers are stacked in the bays after being discharged from the vessels (stacking process). Service times (departure and arrival times of containers, or some estimate of them) are used by port operators to plan the retrieval and stacking process. In this paper, we introduce a time-based model and integer programming formulation that exploits such information. Unlike existing models, our model incorporates service times and thereby enables port operators to manage repositioning and plan the container movements when containers depart and arrive continuously. Moreover, our model captures customer service elements, such as service delay, and allows for adjusting levels of service.

We also present a flexible retrieval planing policy and studied its impacts on the number of relocations and retrieval delay. We showed that such a policy improves operations by decreasing (or holding unchanged) the number of relocations while the waiting time of external trucks is shortened or remains unaffected. 

There are many papers that have proposed and studied heuristics using order-based approaches for the CRP and DCRP. Similar to other optimization models for the CRP, our model is also not suitable for real-time decision making for large instances of the problem. Thus, efforts to design time-based heuristics is an interesting direction for future work. Moreover, it seems that using service times in the model is more convenient when solving the CRP with incomplete information (for example, when we have time windows as an estimate of departure and arrival times of containers, or some probabilistic distribution on the service times). Therefore, studying the CRP with incomplete information in a time-based framework is another important direction for future research.

\bibliographystyle{apa}
\bibliography{ref20}

\begin{thebibliography}{}

\bibitem[\protect\astroncite{Borgman et~al.}{2010}]{ref16}
Borgman, B., Asperen, E., and Dekker, R. (2010).
\newblock Online rules for container stacking.
\newblock {\em OR Spectrum}, 32(3):687--716.

\bibitem[\protect\astroncite{Caserta et~al.}{2012}]{RePEc}
Caserta, M., Schwarze, S., and Vo{\ss}, S. (2012).
\newblock A mathematical formulation and complexity considerations for the
  blocks relocation problem.
\newblock {\em European Journal of Operational Research}, 219(1):96 -- 104.

\bibitem[\protect\astroncite{Caserta et~al.}{2011}]{ref12}
Caserta, M., Voß, S., and Sniedovich, M. (2011).
\newblock Applying the corridor method to a blocks relocation problem.
\newblock {\em OR Spectrum}, 33(4):915--929.

\bibitem[\protect\astroncite{Forster and Bortfeldt}{2012}]{ref10}
Forster, F. and Bortfeldt, A. (2012).
\newblock A tree search procedure for the container relocation problem.
\newblock {\em Computers \& Operations Research}, 39(2):299--309.

\bibitem[\protect\astroncite{Hakan~Aky\"{u}z and Lee}{2014}]{ref7}
Hakan~Aky\"{u}z, M. and Lee, C.-Y. (2014).
\newblock A mathematical formulation and efficient heuristics for the dynamic
  container relocation problem.
\newblock {\em Naval Research Logistics (NRL)}, 61(2):101--118.

\bibitem[\protect\astroncite{Kim and Hong}{2006}]{ref5}
Kim, K.~H. and Hong, G.-P. (2006).
\newblock A heuristic rule for relocating blocks.
\newblock {\em Computers \& Operations Research}, 33(4):940--954.

\bibitem[\protect\astroncite{Lee and Lee}{2010}]{ref4}
Lee, Y. and Lee, Y.-J. (2010).
\newblock A heuristic for retrieving containers from a yard.
\newblock {\em Computers \& Operations Research}, 37(6):1139 -- 1147.

\bibitem[\protect\astroncite{Lehnfeld and Knust}{2014}]{Lit.Review}
Lehnfeld, J. and Knust, S. (2014).
\newblock Loading, unloading and premarshalling of stacks in storage areas:
  Survey and classification.
\newblock {\em European Journal of Operational Research}, 239(2):297 -- 312.

\bibitem[\protect\astroncite{Petering and Hussein}{2013}]{ref9}
Petering, M.~E. and Hussein, M.~I. (2013).
\newblock A new mixed integer program and extended look-ahead heuristic
  algorithm for the block relocation problem.
\newblock {\em European Journal of Operational Research}, 231(1):120 -- 130.

\bibitem[\protect\astroncite{Rei and Pedroso}{2013}]{ref23}
Rei, R. and Pedroso, J. (2013).
\newblock Tree search for the stacking problem.
\newblock {\em Annals of Operations Research}, 203(1):371--388.

\bibitem[\protect\astroncite{Stahlbock and Vo\ss}{2008}]{ref17}
Stahlbock, R. and Vo\ss, S. (2008).
\newblock Operations research at container terminals: a literature update.
\newblock {\em OR Spectrum}, 30(1):1--52.

\bibitem[\protect\astroncite{\"{U}nl\"{u}yurt and Ayd{\i}n}{2012}]{ref15}
\"{U}nl\"{u}yurt, T. and Ayd{\i}n, C. (2012).
\newblock Improved rehandling strategies for the container retrieval process.
\newblock {\em Journal of Advanced Transportation}, 46(4):378--393.

\bibitem[\protect\astroncite{Wan et~al.}{2009}]{ref19}
Wan, Y.-w., Liu, J., and Tsai, P.-C. (2009).
\newblock The assignment of storage locations to containers for a container
  stack.
\newblock {\em Naval Research Logistics (NRL)}, 56(8):699--713.

\bibitem[\protect\astroncite{Zhao and Goodchild}{2013}]{Anne}
Zhao, W. and Goodchild, A.~V. (2013).
\newblock Using the truck appointment system to improve yard efficiency in
  container terminals.
\newblock {\em Maritime Economics \& Logistics}, 15:101--119.

\end{thebibliography}

\newpage
\section*{Appendix A\\ {\large Planning reshuffling moves: Trade-off between relocation and service delay}}
\label{App_examples}


Consider a $3 \times 3$ bay with 6 containers, with the initial configuration given in Figure \ref{fig:rel.delay}. Let $S = (1,3,4,5,6,7)$ be the arrival times of the trucks that pick up containers $c_1$ to $c_6$. Notice that $t=2$ is an idle time-step because the first container is not blocked and can be retrieved at $t=1$. For this simple example, there are two solutions that result in different values for number of relocation moves and total delay as shown in Table \ref{table:ex1}.

\begin{figure}[htb!]
\centering
\begin{tabular}{ |c| c| c|}
\hline
 & & \\
\hline
$(c_1,1)$&$(c_2,3)$&$(c_5,6)$\\
\hline
$(c_4,5)$&$(c_6,7)$&$(c_3,4)$\\
\hline
\end{tabular}
\caption{Initial configuration of the bay }
\label{fig:rel.delay}
\end{figure}

In the first solution, no move is done at $t=2$, whereas in the second solution, $(c_5,6)$ is relocated during the second time-step, resulting in less delay for $(c_3,4)$ later in the retrieval process. However in the second solution, $(c_5,6)$ needs to be relocated again, resulting in a total of 2 relocations. With the objective function being the weighted sum of the relocation moves and total delay, the optimal solution would depend on the relative weights that determine the trade-off between the two metrics. For example if we assign equal weights to both terms in the objective function, both solutions would be optimal as they have same objective function value. If we assign a larger weight to relocation than delay, the first solution (with 1 relocation and 4 delay) would be optimal.

\begin{table}[htb!]
\begin{center}
\begin{tabular}{ l c  c}
\cline{2-3}
 & Solution 1 & Solution 2 \\
\hline
t=1 & retrieve $(c_1,1)$ : [1,2] $\rightarrow$ out & retrieve $(c_1,1)$ : [1,2] $\rightarrow$ out  \\
\textbf{t=2} & \textbf{idle} & \textbf{relocate  $(c_5,6)$ : [3,2] $\rightarrow$ [1,2]}  \\
t=3 & retrieve $(c_2,3)$ : [2,2] $\rightarrow$ out & retrieve $(c_2,3)$ : [2,2] $\rightarrow$ out  \\
t=4 & relocate  $(c_5,6)$ : [3,2] $\rightarrow$ [2,2] & retrieve $(c_3,4)$ : [3,1] $\rightarrow$ out  \\
t=5 & retrieve $(c_3,4)$ : [3,1] $\rightarrow$ out & relocate $(c_5,6)$ : [1,2] $\rightarrow$ [2,2]  \\
t=6 & retrieve $(c_4,5)$ : [1,1] $\rightarrow$ out & retrieve $(c_4,5)$ : [1,1] $\rightarrow$ out  \\
t=7 & retrieve $(c_5,6)$ : [2,2] $\rightarrow$ out & retrieve $(c_5,6)$ : [2,2] $\rightarrow$ out  \\
t=8 & retrieve $(c_6,7)$ : [2,1] $\rightarrow$ out & retrieve $(c_6,7)$ : [2,1] $\rightarrow$ out  \\
 \hline
 Relocations & 1 & 2\\
 Total delay & 4 & 3 \\
 \hline
\end{tabular}
\end{center}
\caption{Two sequence of moves for the bay shown in Figure \ref{fig:rel.delay}: if $w_{rel} = w_r$, both solutions are optimal}
\label{table:ex1}
\end{table}

\newpage
\section*{Appendix B\\ {\large Proof of Lemma \ref{Lemma1}}}

\begin{proof}
Suppose $M$ is an optimal sequence of moves, and let $M(t)$ be the move of time $t$. Also, suppose there exist a time step $\tilde{t}$, and a container $(c_{\tilde{n}},d_{\tilde{n}})$ such that: \textbf{i.} $M(\tilde{t})$ is idle; and \textbf{ii.} $c_{\tilde{n}}$ is in the bay at time $\tilde{t}$, and $d_{\tilde{n}} \le \tilde{t}$. Given $M$, we can construct another optimal solution, $M'$, as follows:

We swap the two moves, $M(\tilde{t})$ and $M(\tilde{t}+1)$. As a result, $M(\tilde{t}+1)$ becomes idle. We repeat the swapping for $M(\tilde{t}+1)$ and $M(\tilde{t}+2)$, if conditions \textbf{(i)} and  \textbf{(ii)} are satisfied for $M(\tilde{t}+1)$. We continue this process until we reach a time step that does not satisfy the two conditions. In the resulting sequence of moves, $M'$, the idle time step that was initially at $\tilde{t}$ is postponed to a later time step (that does not satisfy conditions \textbf{(i)} and  \textbf{(ii)}). Next we show that $M'$ is feasible and optimal.

First note that an idle time step does not make any changes in the configuration; therefore adding or removing or shifting an idle time does not violate any of the constraints that involve feasibility of configuration. For the same reason, constraints \ref{const4c} that ensure at most one move is performed at each time step, will not be violated. Secondly, note that by postponing the idle time step, the retrieval time of a container in $M'$ can only be earlier than its retrieval time in $M$. Moreover, because we check condition \textbf{(ii)} for each swapping, it is ensured that a container is retrieved only after its departure time. Thus, all retrievals in $M'$ are within the allowable time windows. Lastly, the containers are retrieved in the prescribed order in $M'$, because we only swap an idle time step with its next move, and such a swapping does not affect the initial retrieval order of containers in $M$. 

The resulting sequence of moves, $M'$, is also optimal; because the only difference between $M$ and $M'$ is that the order of some moves are different, which does not affect the total number of relocation moves.
\end{proof}

\newpage
\section*{Appendix C \\ {\large Example for DCRP}}
\label{appendixC}

Consider a set of 9 containers ($c_1,\dots,c_{9}$) that need to be stacked in a $3\times4$ bay, which is initially empty. The stacking schedule is given by $S_s = (1,1,1,1,1,2,2,2,2)$, which implies that the first five containers have to be stacked before the last four containers, but the containers in each group do not have to be stacked in a particular order (this is enforced by Constraint \ref{stack order}). The retrieval process starts after the stacking process (i.e. there is no overlap between the two processes in this example). During the retrieval process, containers $c_1$ to $c_9$ will be picked up by external trucks whose arrival schedule is given by $S_r = (15,19,18,20,16,17,21,22,23)$. The objective is sum of retrieval delay, stacking delay, and total number of relocations. Solution to this problem is the sequence of moves for stacking and later retrieving the containers, as well as the relocation moves. Table \ref{table:ex.DCRP} summarizes the solution.


\begin{table}[htb!]
\begin{center}
\begin{tabular}{ l c c l c}
\cline{2-2}\cline{5-5}
 & Stacking process & \textcolor{white}{11} & & Retrieval process \\
\cline{1-2}\cline{4-5}
t=1 & stack $c_5$ : out  $\rightarrow$ [1,1]  & &  t=15 & retrieve $(c_1,15)$ : [1,2] $\rightarrow$ out  \\
t=2 & stack $c_1$ : out  $\rightarrow$ [1,2]  & & t=16 & retrieve $(c_5,16)$ : [1,1] $\rightarrow$ out  \\
t=3 & stack $c_4$ : out  $\rightarrow$ [2,1] & & t=17 & retrieve $(c_6,17)$ : [3,4] $\rightarrow$ out  \\
t=4 & stack $c_2$ : out  $\rightarrow$ [2,2] & & t=18 & retrieve $(c_3,18)$ : [2,3] $\rightarrow$ out  \\
t=6 & stack $c_{3}$ : out  $\rightarrow$ [2,3] & & t=19 & retrieve $(c_2,19)$ : [2,2] $\rightarrow$ out  \\
t=7 & stack $c_9$ : out  $\rightarrow$ [3,1]  & &  t=20 & retrieve $(c_4,20)$ : [2,1] $\rightarrow$ out  \\
t=8 & stack $c_8$ : out  $\rightarrow$ [3,2] & & t=21 & retrieve $(c_7,21)$ : [3,3] $\rightarrow$ out  \\
t=9 & stack $c_7$ : out  $\rightarrow$ [3,3] & & t=22 & retrieve $(c_8,22)$ : [3,2] $\rightarrow$ out \\
t=10 & stack $c_6$ : out  $\rightarrow$ [3,4] & & t=23 & retrieve $(c_9,23)$ : [3,1] $\rightarrow$ out \\

\cline{1-2}\cline{4-5}
\end{tabular}
\end{center}
\caption{Sequence of stacking and retrieval moves}
\label{table:ex.DCRP}
\end{table}

As shown in Table \ref{table:ex.DCRP}, the optimal sequence of stacking is $c_5,c_1,c_4,c_2,c_3$ for the containers of first group and $c_9,c_8,c_7,c_6$ for containers of the second group. With this stacking plan, the retrieval process has no relocation move and the total retrieval delay is zero. If we had to specify a fixed stacking order as an input to the model, the retrieval delay and relocations could be more because it is not obvious what is the best stacking order (that minimizes the objective). For example, if we set the stacking order as $c_1,c_2,c_3,c_4,c_5,c_{9},c_8,c_7,c_6$, total relocations and total retrieval delay would be 3 and 23, respectively. Notice that as shown in this example, the time-based model for DCRP can also be used for determining the best slot to stack a container given its pick up time in the future.

\newpage
\section*{Appendix D \\ {\large Integer programming formulation for CRP and DCRP}}

\begin{eqnarray}
&&\left.
\begin{array}{l l}
\sum\limits_{i=1}^{C}\sum\limits_{j=1}^{P}b_{ijnt}+v_{nt}= 1,\nonumber\quad n=1,...,N,\quad t=1,...,T  & \quad \textrm{for CRP}  \vspace{0.4cm}\\ \nonumber
\sum\limits_{i=1}^{C}\sum\limits_{j=1}^{P}b_{ijnt}+v_{nt}=z_{nt},\quad n=1,...,N,\quad t=1,...,T & \quad\textrm{for DCRP} \vspace{0.4cm}\\\nonumber
\sum\limits_{n=1}^{N}b_{ijnt} \leq 1,\quad i=1,...,C,\quad j=1,...,P,\quad t=1,...,T &  \vspace{0.4cm}\\\nonumber
\sum\limits_{n=1}^{N}b_{ijnt} \geq \sum\limits_{n=1}^{N}b_{ij+1nt},\quad i=1,...,C,\quad j=1,...,P-1,\quad t=1,...,T  &  \vspace{0.4cm}\\\nonumber
\sum\limits_{i,k=1}^{C}\sum\limits_{j,l=1}^{P}\sum\limits_{n=1}^{N} x_{ijklnt} + \sum\limits_{i=1}^{C}\sum\limits_{j=1}^{P}\sum\limits_{n=1}^{N}y_{ijnt} \leq 1,\quad t=1,...,T 
& \quad  \textrm{for CRP} \vspace{0.4cm}\\\nonumber
\sum\limits_{i,k=1}^{C}\sum\limits_{j,l=1}^{P}\sum\limits_{n=1}^{N} x_{ijklnt} + \sum\limits_{i=1}^{C}\sum\limits_{j=1}^{P}\sum\limits_{n=1}^{N}y_{ijnt} + \sum\limits_{i=1}^{C}\sum\limits_{j=1}^{P}\sum\limits_{n=1}^{N}s_{ijnt} \leq 1,\quad t=1,...,T & \quad   \textrm{for DCRP} \vspace{0.4cm}\\ \nonumber
(t-m-n)\sum\limits_{i=1}^{C}\sum\limits_{j=1}^{P}y_{ijnt} + \sum\limits_{i=1}^{C}\sum\limits_{j=1}^{P}\sum\limits_{n=1}^{N}\sum\limits_{t^\prime=1}^{t-1}y_{ijnt^\prime}\le t-1\\\nonumber
 \quad \quad \quad \quad \quad \quad \quad \quad t=1,\dots,T,\quad n=1,\dots,N \quad \textrm{and} \quad d_1<d_2<\dots<d_N &  \vspace{0.4cm}\\\nonumber
\sum\limits_{t=1}^{T}z_{nt} \geq \sum\limits_{t=1}^{T}z_{n^{\prime}t},\quad \forall n, n^{\prime} \quad s.t. \quad a_{n^{\prime}}=a_n + 1 & \quad  \textrm{for DCRP}  \vspace{0.4cm}\\ \nonumber
b_{ijnt}=b_{ijnt-1}-\sum\limits_{k=1}^{C}\sum\limits_{l=1}^{P} x_{ijklnt-1} + \sum\limits_{k=1}^{C}\sum\limits_{l=1}^{P} x_{klijnt-1} - y_{ijnt-1} &\quad  \textrm{for CRP} \\ \nonumber 
\quad \quad \quad \quad \quad \quad \quad  i=1,...,C,\quad j=1,...,P,\quad n=1,...,N,\quad t=2,...,T &  \vspace{0.4cm}\\\nonumber
b_{ijnt}=b_{ijnt-1}-\sum\limits_{k=1}^{C}\sum\limits_{l=1}^{P} x_{ijklnt-1} + \sum\limits_{k=1}^{C}\sum\limits_{l=1}^{P} x_{klijnt-1} - y_{ijnt-1}+s_{ijnt-1}& \quad \textrm{for DCRP}\nonumber \\
\quad \quad \quad \quad \quad  \quad \quad \quad \quad \quad i=1,...,C,\quad j=1,...,P,\quad    n=1,...,N,\quad t=2,...,T &  \vspace{0.4cm}\nonumber\\
v_{nt} =\sum\limits_{i=1}^{C}\sum\limits_{j=1}^{P}\sum\limits_{t^{'}=1}^{t-1}y_{ijnt^{'}},\quad n=1,...,N,\quad t=1,...,T & \quad  \textrm{for CRP} \vspace{0.4cm}\\\nonumber
z_{nt} =\sum\limits_{i=1}^{C}\sum\limits_{j=1}^{P}\sum\limits_{t^{'}=1}^{t-1}s_{ijnt^{'}},\quad n=1,...,N,\quad t=1,...,T & \quad  \textrm{for DCRP}  \vspace{0.4cm}\\\nonumber
\end{array} \right. \\ \nonumber 
\end{eqnarray}
 
 \begin{eqnarray}
 &&\left\{
 \begin{array}{l}
 \vspace{0.4cm}
 \sum\limits_{i=1}^{C}\sum\limits_{j=1}^{P}\sum\limits_{t=1}^{d_{n}-1}y_{ijnt}=0,\quad \quad \quad \quad n=1,...,N \nonumber\\
 \sum\limits_{i=1}^{C}\sum\limits_{j=1}^{P}\sum\limits_{t=d_{n}}^{d_{n}+\delta_n}y_{ijnt}=1,\quad\quad\quad\textrm{\textcolor{white}{1}} n=1,...,N \nonumber\\
 \sum\limits_{i=1}^{C}\sum\limits_{j=1}^{P}\sum\limits_{t=d_{n}+\delta_n+1}^{T}y_{ijnt}=0,\quad \quad n=1,...,N \vspace{0.4cm}\nonumber\\
 \end{array} \right.  \\ \nonumber 
 &&\left\{
\begin{array}{l}
\vspace{0.4cm}
\sum\limits_{i=1}^{C}\sum\limits_{j=1}^{P}\sum\limits_{t=1}^{a_{n}-1}s_{ijnt}=0,\quad \quad\quad\quad   n=1,...,N \nonumber \\
\sum\limits_{i=1}^{C}\sum\limits_{j=1}^{P}\sum\limits_{t=a_{n}}^{a_{n}+\alpha_n}s_{ijnt}=1,\quad\quad\quad  \textrm{\textcolor{white}{1}} n=1,...,N \nonumber
\quad\quad\quad\quad\quad\quad\quad\quad\quad\quad\quad\quad\quad\quad\quad\quad\quad\textrm{for DCRP} \nonumber \\
\sum\limits_{i=1}^{C}\sum\limits_{j=1}^{P}\sum\limits_{t=a_{n}+\alpha_n+1}^{T}s_{ijnt}=0,\quad\quad   n=1,...,N \vspace{0.4cm}\nonumber \\
\end{array} \right. \nonumber 
\end{eqnarray}

\newpage
\section*{Appendix E \\ {\large Proof of Proposition \ref{m out}}}

\begin{proof}
Proof is by contradiction. 
Suppose there is at least one container with more than $m$ out-of-order retrievals, and let $n$ be one of such containers with $m+1$ out-of-order retrievals. Since there are $m+1$ out-of-order retrievals and we assumed that at most $(n-1)+m$ containers are retrieved before container $(n,d_n)$, we know that the number of containers whose departure time is smaller than $d_n$ and are retrieved before container $(n,d_n)$, is at most $(n-1)+m - (m+1) = n-2$. Therefore, at least one container (from the set $\{(1,d_1),(2,d_2),\dots,(n-1,d_{n-1})\}$) is retrieved after $(n,d_n)$. Without loss of generality, let $(n-1,d_{n-1})$ be the container that is going to be retrieved after $(n,d_n)$. Note that at least $n-1+m$ retrievals (including $m+1$ out-of-order) take place before retrieving container $(n-1,d_{n-1})$, and this contradicts our assumption that before container $n-1$, we have at most $((n-1)-1)+m=n-2+m$ retrievals.

The condition of having at most $(n-1)+m$ retrievals before each container $n$, is enforced by Constraint \eqref{flexible ret}. These constraints ensure that if container $n$ is being retrieved at time $t$, then the number of retrievals performed up to time $t$ does not exceed $(n-1)+m$. Note that the first term in the left hand side is $(t-m-n)\times 1$ whenever container $n$ is being retrieved at time $t$ and therefore the constraint is reduced to $(t-m-n) + \sum\limits_{i=1}^{C}\sum\limits_{j=1}^{P}\sum\limits_{n=1}^{N}\sum\limits_{t^\prime=1}^{t-1}y_{ijnt^\prime}\le t-1$, enforcing the sum of retrievals up to time $t$ to be at most $(n-1)+m$. When container $n$ is not being retrieved at time $t$, the first term in the left hand side is zero and the constraint is reduced to $\sum\limits_{i=1}^{C}\sum\limits_{j=1}^{P}\sum\limits_{n=1}^{N}\sum\limits_{t^\prime=1}^{t-1}y_{ijnt^\prime}\le t-1$ which is simply a loose bound (and a redundant constraint) on the number of retrieval moves up to time $t$.
\end{proof}

\end{document}